\providecommand{\tabularnewline}{\\}
\numberwithin{equation}{section}
\numberwithin{figure}{section}
\theoremstyle{plain}
\newtheorem{thm}{\protect\theoremname}
\theoremstyle{remark}
\newtheorem{rem}[thm]{\protect\remarkname}
\theoremstyle{plain}
\newtheorem{lem}[thm]{\protect\lemmaname}
\theoremstyle{plain}
\newtheorem{prop}[thm]{\protect\propositionname}
\providecommand{\lemmaname}{Lemma}
\providecommand{\propositionname}{Proposition}
\providecommand{\remarkname}{Remark}
\providecommand{\theoremname}{Theorem}
\begin{document}
\title{$F$-theory over a Fano threefold built from $A_{4}$-roots }
\author{Herbert Clemens and Stuart Raby}
\date{December 19, 2021}
\email{clemens.43@osu.edu, raby.1@osu.edu}
\begin{abstract}
In previous papers, the authors showed the advantages of building
a $\mathbb{Z}_{2}$-action into an $F$-theory model $W_{4}/B_{3}$,
namely the action of complex conjugation on the complex algebraic
group with compact real form $E_{8}$. The goal of this paper is to
construct the Fano threefold $B_{3}$ directly from the roots of $SU\left(5\right)$
in such a way that the action of complex conjugation is exactly the
desired $\mathbb{Z}_{2}$-action and the quotient of this action on
$W_{4}/B_{3}$ and its Heterotic dual have the phenomenologically
correct invariants. 
\end{abstract}

\maketitle
\tableofcontents{}

\section{Introduction}

A particular challenge in Heterotic $F$-theory duality arises when
one wishes to transfer a $\mathbb{Z}_{2}$-action 
\begin{equation}
V_{3}^{\vee}/B_{2}^{\vee}=\frac{V_{3}/B_{2}}{\mathbb{Z}_{2}}\label{eq:Het}
\end{equation}
on a elliptically fibered Calabi-Yau Heterotic threefold $V_{3}/B_{2}$
to a $\mathbb{Z}_{2}$-action 
\begin{equation}
W_{4}^{\vee}/B_{3}^{\vee}=\frac{W_{4}/B_{3}}{\mathbb{Z}_{2}}\label{eq:F-th}
\end{equation}
on an elliptically fibered Calabi-Yau fourfold that becomes the $F$-theory
dual. We have proposed a framework for such a duality in \cite{Clemens-1}.
$W_{4}/B_{3}$ with $B_{3}=B_{2}\times\mathbb{P}_{\left[u_{0},v_{0}\right]}$
with del Pezzo $B_{2}$ is defined by a \textit{Tate form} 
\begin{equation}
wy^{2}=x^{3}+a_{5}xyw+a_{4}zx^{2}w+a_{3}z^{2}yw^{2}+a_{2}z^{3}xw^{2}+a_{0}z^{5}w^{3}\label{eq:Tate}
\end{equation}
with $a_{j},z,\frac{y}{x}\in H^{0}\left(K_{B_{3}}^{-1}\right)^{\left[-1\right]}$
with respect to the $\mathbb{Z}_{2}$-action. We will require that
$W_{4}/B_{3}$ be defined subject to the condition 
\begin{equation}
a_{5}+a_{4}+a_{3}+a_{2}+a_{0}=0.\label{eq:seccond}
\end{equation}
The condition is equivalent to the condition that $W_{4}/B_{3}$ have
a second section $\tau$ given by 
\[
\begin{array}{c}
x=wz^{2}\\
y=wz^{3}.
\end{array}
\]
Incorporating translation by the difference of $\tau$ and the standard
section $\zeta$ given by 
\[
\begin{array}{c}
x=0\\
w=0
\end{array}
\]
into the $\mathbb{Z}_{2}$-action allows us to eliminate vector-like
exotics in a final paper \cite{Clemens-3} of this series.

Furthermore $B_{3}$ is a $\mathbb{P}^{1}$-fiber bundle over $B_{2}$
on which $\mathbb{Z}_{2}$ must act equivariantly. One desires such
a configuration in order to employ the Wilson line mechanism for symmetry-breaking
consistently and simultaneously on both the Heterotic model and its
$F$-theory dual.

The $\mathbb{Z}_{2}$-action on a $V_{3}$ must be free. Furthermore
on the $F$-theory side the $\mathbb{Z}_{2}$-action must restrict
to a free $\mathbb{Z}_{2}$-action on a distinguished smooth anti-canonical
divisor $S_{\mathrm{GUT}}\subseteq B_{3}$. Therefore it must act
skew-symmetrically on the anti-canonical section $z$ defining $S_{\mathrm{GUT}}$.
In \cite{Clemens-1} we showed that, while $x$ and $w$ are symmetric
with respect to the $\mathbb{Z}_{2}$-action on (\ref{eq:Tate}),
$y$, $z$, and all the $a_{j}$ have to be skew-symmetric.

The Heterotic $\mathbb{Z}_{2}$-action must preserve the initial $E_{8}$-symmetry
and so the $F$-theory $\mathbb{Z}_{2}$-action must preserve initial
$E_{8}$-symmetry as well. In short, the challenge is to begin with
$E_{8}$-symmetry on both the Heterotic and $F$-theory sides and,
for successive subgroups $G_{\mathbb{R}}\leq E_{8},$ to match breaking
to $G_{\mathbb{R}}$-symmetry on the Heterotic side with simultaneous
breaking to $G_{\mathbb{R}}$-symmetry on the $F$-theory side throughout,
ending with symmetry-breaking to $G_{\mathbb{R}}=SU\left(3\right)\times SU\left(2\right)\times U\left(1\right)$,
the so-called Minimal Supersymmetric Standard Model {[}MSSM{]}.

As we showed in \cite{Clemens-1} , the necessity that $\mathbb{Z}_{2}$
must act as 
\[
\frac{dx}{y}\mapsto-\frac{dx}{y}
\]
on the relative one-form of the elliptic fibration $W_{4}/B_{3}$
implies that it must incorporate the central involution 
\[
-I_{8}:\mathfrak{h}_{E_{8}^{\mathbb{C}}}\rightarrow\mathfrak{h}_{E_{8}^{\mathbb{C}}}
\]
on the Cartan subalgebra of the complex algebraic group $E_{8}^{\mathbb{C}}$
at the outset \textit{without} breaking initial $E_{8}$-symmetry
on the quotients \eqref{eq:Het})and \eqref{eq:F-th}.

To achieve this, in \cite{Clemens-1} we proposed the method of replacing
all roots $\rho$ with $-\rho$ via the operation of complex conjugation
on the complex algebraic group $E_{8}^{\mathbb{C}}$ and all relevant
subgroups $G_{\mathbb{C}}$, an operation that restricts to the identity
on all compact real forms $G_{\mathbb{R}}$. Since all the compact
real forms have faithful real matrix representations, this complex
conjugation operator will not affect $G_{\mathbb{R}}$-symmetry and
will commute with the various symmetry-breaking steps.

The purpose of this paper is to build the appropriate base space $B_{3}$
of the elliptically fibered $F$-theory model $W_{4}/B_{3}$ in such
a way that the $\mathbb{Z}_{2}$-action on $B_{3}$ is exactly that
induced by the complex conjugation operator on the complex algebraic
group $SL\left(5;\mathbb{C}\right)$. Therefore it will fix the compact
real form $SU\left(5\right)$ so that the requisite Wilson line can
be wrapped simultaneously on the $\mathbb{Z}_{2}$-actions on the
Heterotic and $F$-theory sides.

\subsection{Building $B_{3}$ from the action of the Weyl group of $SU\left(5\right)$
on its complexified Cartan subalgebra}

In fact we will build $B_{3}$ as a quotient of the resolution of
the projectivization of the graph of 
\[
\left(\exp\left(2\pi i\text{·}\alpha\right)\mapsto\exp\left(2\pi i\text{·}\left(-\alpha\right)\right)\right)
\]
on the maximal torus 
\[
\exp\left(\mathfrak{h}_{SL\left(5;\mathbb{C}\right)}\right)
\]
by an action of the longest element of the Weyl group $W\left(SL\left(5;\mathbb{C}\right)\right)$.
This will allow the $\mathbb{Z}_{2}$-action on $B_{3}$ to automatically
commute with the action of complex conjugation.

In later sections we will show that $B_{3}$ as constructed will have
the correct numerical characteristics so that the $F$-theory model
(\ref{eq:F-th}) will have the desired properties ($3$-generation,
correct chiral invariants, no vector-like exotics, etc.). The transfer
of information between the $F$-theory and its Heterotic dual is the
subject of a companion paper \cite{Clemens-1}. The application to
the production of the final phenomenologically consistent $F$-theory/Heterotic
duality is the subject of the final paper \cite{Clemens-3} in this
sequence.
\begin{rem}
Throughout this paper, we will let 
\[
\mathbb{P}_{\left[i_{1},\ldots,,i_{d}\right]}^{d-1}
\]
denote the weighted complex projective $\left(d-1\right)$-space with
weights $\left[i_{1},\ldots,,i_{d}\right]$ and will let 
\[
\mathbb{P}_{\left[u_{1},\ldots,u_{d}\right]}
\]
denote the (unweighted) complex projective space with homogeneous
coordinates $\left[u_{1},\ldots,,u_{d}\right]$. 
\end{rem}

\section{The spectral divisor}
\begin{flushleft}
The role of the Tate form (\ref{eq:Tate}) is to break $E_{8}$-symmetry
to that of the first summand of its maximal sub-group 
\[
\frac{SU\left(5\right)_{gauge}\times SU\left(5\right)_{Higgs}}{\mathbb{Z}_{5}}.
\]
The crepant resolution $\tilde{W}_{4}/B_{3}$ of $W_{4}/B_{3}$ will
have $I_{5}$-type fibers over generic points of 
\[
S_{\mathrm{GUT}}:=\left\{ z=0\right\} \subseteq B_{3}.
\]
The $I_{5}$-fibration over $S_{\mathrm{GUT}}$ carries the $SU\left(5\right)_{gauge}$-symmetry.
$SU\left(5\right)_{Higgs}$-symmetry is broken on a five-sheeted branched
covering of $B_{3}$ given by the lift of 
\begin{equation}
\mathcal{C}_{Higgs}:=W_{4}\text{·}\left(\left\{ wy^{2}=x^{3}\right\} -\left\{ w=0\right\} \right)\label{eq:specdiv}
\end{equation}
to a divisor $\tilde{\mathcal{C}}_{Higgs}\subseteq\tilde{W}_{4}$.
Its symmetry is broken by assigning non-trivial eigenvalues to the
fundamental representation $SU\left(5\right)_{Higgs}$ using the spectral
construction with respect to the push-forward to $B_{3}$ of a line
bundle $\mathcal{L}_{Higgs}$ on $\tilde{\mathcal{C}}_{Higgs}$. We
see this as follows. 
\par\end{flushleft}

In parallel to the construction for $SU\left(5\right)_{gauge}$ in
\cite{Clemens-1}, we imbed 
\[
\begin{array}{ccc}
 &  & \mathfrak{h}_{SU\left(5\right)_{Higgs}}^{\mathbb{C}}\\
 &  & \downarrow^{\left(c_{2},c_{3},c_{4},c_{5}\right)}\\
\left(S_{\mathrm{GUT}}-\left\{ a_{0}=0\right\} \right) & \rightarrow & \frac{\mathfrak{h}_{SU\left(5\right)_{Higgs}}^{\mathbb{C}}}{W\left(SU\left(5\right)\right)}
\end{array}
\]
in such a way that the image of $W_{4}-\left\{ a_{0}=0,\,w=1\right\} $
in $\mathbb{C}^{3}\times\frac{\mathfrak{h}_{SU\left(5\right)_{Higgs}}^{\mathbb{C}}}{W\left(SU\left(5\right)\right)}$
is a family of rational double-point surface singularities. The above
diagram allows the Casimir operators $c_{j}$ to operate on the fundamental
representation of $SU\left(5\right)_{Higgs}$ with eigenvalues that
are tracked via a spectral construction \cite{Donagi,Donagi:2008kj}. 
\begin{flushleft}
The Tate form \eqref{eq:Tate}) then records the above geometrically
in $W_{4}/B_{3}$ by considering it as a hypersurface in 
\[
P:=\mathbb{P}\left(\mathcal{O}_{B_{3}}\oplus\mathcal{O}_{B_{3}}\left(2N\right)\oplus\mathcal{O}_{B_{3}}\left(3N\right)\right)
\]
with fiber coordinate $\left[w,x,y\right]$. 
\par\end{flushleft}

\subsubsection{The spectral divisor}
\begin{flushleft}
We define the map
\[
\begin{array}{c}
\mathbb{P}\left(\mathcal{O}_{B_{3}}\oplus\mathcal{O}_{B_{3}}\left(N\right)\right)\rightarrow\mathbb{P}\left(\mathcal{O}_{B_{3}}\oplus\mathcal{O}_{B_{3}}\left(2N\right)\oplus\mathcal{O}_{B_{3}}\left(3N\right)\right)=P\\
\left[w,t\right]\mapsto\left[w,x=t^{2}w,y=t^{3}w\right],
\end{array}
\]
Dividing by $w^{3}$ the inverse image of $W_{4}$ in $\mathbb{P}\left(\mathcal{O}_{B_{3}}\oplus\mathcal{O}_{B_{3}}\left(N\right)\right)$
has equation
\begin{equation}
0=a_{5}t^{5}+a_{4}zt^{4}+a_{3}z^{2}t^{3}+a_{2}z^{3}t^{2}+a_{0}z^{5}\label{eq:ratmp}
\end{equation}
We next blow up up the locus $\left\{ t=z=0\right\} $ in $\mathbb{P}\left(\mathcal{O}_{B_{3}}\oplus\mathcal{O}_{B_{3}}\left(N\right)\right)$
via
\begin{equation}
\left\{ \left|\begin{array}{cc}
t & z\\
T & Z
\end{array}\right|=0\right\} \subseteq\mathbb{P}\left(\mathcal{O}_{B_{3}}\oplus\mathcal{O}_{B_{3}}\left(N\right)\right)\times\mathbb{P}_{\left[T,Z\right]}\label{tee}
\end{equation}
in which the proper transform of \eqref{eq:ratmp} becomes
\begin{equation}
0=a_{5}T^{5}+a_{4}ZT^{4}+a_{3}Z^{2}T^{3}+a_{2}Z^{3}T^{2}+a_{0}Z^{5}\label{eq:speceq2-1}
\end{equation}
defining the \textit{spectral divisor} that we denote as $\mathcal{D}.$
In particular, the spectral divisor expands the singular locus $\left\{ x=y=z=0\right\} $
of $W_{4}$. The condition (\ref{eq:seccond}) implies that homogeneous
form in (\ref{eq:speceq2-1}) is divisible by $Z-T$, that is, the
spectral divisor admits a $\left(4+1\right)$ factorization. 
\begin{equation}
\mathcal{D}=\mathcal{D}^{\left(4\right)}+\mathcal{D}^{\left(1\right)}\subseteq B_{3}\times\mathbb{P}_{\left[T,Z\right]}\label{eq:specdiv-1}
\end{equation}
given by the equation 
\begin{equation}
\begin{array}{c}
0=a_{5}T^{5}+a_{4}ZT^{4}+a_{3}Z^{2}T^{3}+a_{2}Z^{3}T^{2}+a_{0}Z^{5}=\\
\left(a_{5}T^{4}+a_{54}T^{3}Z-a_{20}T^{2}Z^{2}-a_{0}TZ^{3}-a_{0}Z^{4}\right)\left(T-Z\right)
\end{array}\label{eq:speceq-2}
\end{equation}
where $a_{jk}:=a_{j}+a_{k}$. The involution $\tilde{\beta}_{4}/\beta_{3}$
of $W_{4}/B_{3}$ leaves \eqref{eq:speceq-2} and each of its two
factors invariant.\footnote{This $4+1$ split is often written in terms of the variable $s=Z/T$,
e.g. the factor $\left(Z-T\right)$ becomes the factor $\left(s-1\right)$
used to remove $\mathbf{10}{}_{\left\{ -4\right\} }$ states as in
formula (70) in \cite{Blumenhagen-1}. In our case, the $4+1$ split
is global.} 
\par\end{flushleft}

\section{The role of the Cartan sub-algebra $\mathfrak{h}_{SL\left(5;\mathbb{C}\right)_{gauge}}$}

\subsection{Tracking roots}

Again referring to \cite{Clemens-1} we track roots during symmetry-breaking
from $E_{8}$ to the maximal subgroup

\[
\frac{SU\left(5\right)_{gauge}\times SU\left(5\right)_{Higgs}}{\mathbb{Z}_{5}}
\]
on the $F$-theory side by returning to the Tate form 
\[
wy^{2}=x^{3}+a_{5}xyw+a_{4}zx^{2}w+a_{3}z^{2}yw^{2}+a_{2}z^{3}xw^{2}+a_{0}z^{5}w^{3}
\]
where we divide both sides by $a_{0}^{6}$, rescale by the rule 
\[
\begin{array}{c}
\frac{x}{a_{0}^{2}}\rightarrow x\\
\frac{y}{a_{0}^{3}}\rightarrow y\\
\frac{z}{a_{0}}\rightarrow z
\end{array}
\]
and define the functions 
\[
c_{j}=\frac{a_{j}}{a_{0}}
\]
on $B'_{3}:=B_{3}-\left\{ a_{0}=0\right\} $. By this rescaling we
obtain the equation 
\begin{equation}
y^{2}=x^{3}+c_{5}xy+c_{4}zx^{2}+c_{3}z^{2}y+c_{2}z^{3}x+z^{5}\label{eq:singdef}
\end{equation}
in the variables $\left(x,y,z\right)$ parametrized by the `free'
variables $\left(c_{2},c_{3},c_{4},c_{5}\right)$ that we interpret
as a family $\mathcal{V}_{gauge}$ of rational double-point surface
singularities. (See §4.1 of \cite{Clemens-1}.)

Next by interpreting the $c_{j}$ as the $SU\left(5\right)$ Casimir
generators, we pull the family $\mathcal{V}_{gauge}$ back to $\mathfrak{h}_{SL\left(5;\mathbb{C}\right)_{gauge}}$by
the map 
\[
\left(c_{2},c_{3},c_{4},c_{5}\right):\mathfrak{h}_{SL\left(5;\mathbb{C}\right)_{gauge}}\rightarrow\frac{\mathfrak{h}_{SL\left(5;\mathbb{C}\right)_{gauge}}}{W\left(SL\left(5;\mathbb{C}\right)\right)}
\]
where we were able to interpret it as a family of weighted homogeneous
polynomials of weight $30$ that is therefore also obtained via pull-back
from a map to the semi-universal deformation of 
\begin{equation}
y^{2}=x^{3}+z^{5}.\label{eq:E8}
\end{equation}

Next defining 
\begin{equation}
z:=\sum_{j=2}^{5}\kappa_{j}\alpha_{j}\label{eq:zee-1}
\end{equation}
for general complex constants $\kappa_{j}$ as in §4.1 of \cite{Clemens-1},
we obtained morphisms 
\[
B'_{3}\rightarrow\frac{\mathfrak{h}_{SL\left(5;\mathbb{C}\right)}}{W\left(SL\left(5;\mathbb{C}\right)\right)}
\]
and 
\[
W_{4}':=W_{4}\times_{B_{3}}B'_{3}\rightarrow\mathcal{V}_{gauge}
\]
for $B'_{3}=B_{3}-\left\{ a_{0}=0\right\} $. Further we showed that
the complex conjugation operator $\iota$ induces equivariant involutions
\begin{equation}
\left(\left(a_{0},a_{2},a_{3},a_{4},a_{5}\right),x,y,z\right)\mapsto\left(\left(-a_{0},-a_{2},-a_{3},-a_{4},-a_{5}\right),x,-y,-z\right)\label{eq:iota}
\end{equation}
on $W_{4}/B_{3}$ and 
\[
\left(\left(c_{2},c_{3},c_{4},c_{5}\right),x,y,z\right)\mapsto\left(\left(c_{2},-c_{3},c{}_{4},-c_{5}\right),x,-y,z\right)
\]
on (\ref{eq:singdef}). This allowed us in \cite{Clemens-1} to interpret
the equivariant crepant resolution of (\ref{eq:singdef}) over $\mathfrak{h}_{SL\left(5;\mathbb{C}\right)}$
as induced by the Brieskorn-Grothendieck equivariant crepant resolution
\cite{Brieskorn,Slodowy} of the semi-universal deformation of (\ref{eq:E8})
and thereby track roots and the action 
\[
\begin{array}{ccc}
i\text{·}\mathfrak{h}_{SU\left(5\right)_{gauge}}\times i\text{·}\mathfrak{h}_{SU\left(5\right)_{Higgs}} & \overset{-I_{4}\times-I_{4}}{\longrightarrow} & i\text{·}\mathfrak{h}_{SU\left(5\right)_{gauge}}\times i\text{·}\mathfrak{h}_{SU\left(5\right)_{Higgs}}\\
\downarrow &  & \downarrow\\
i\text{·}\mathfrak{h}_{E_{8}} & \overset{-I_{8}}{\longrightarrow} & i\text{·}\mathfrak{h}_{E_{8}}
\end{array}
\]
of complex conjugation.

\subsection{Notation distinguishing Weyl chambers}

We will have a single Tate form defining our $F$-theory model $W_{4}$
but initially we will have two desingularizations that we will denote
as $\dot{W}_{4}/\dot{B}_{3}$ or the `blue' desingularization and
as $\ddot{W}_{4}/\ddot{B}_{3}$ or the `red' desingularization, depending
on whether we consider a given Weyl chamber of $SU\left(5\right)$
or its negative as the `positive' Weyl chamber. $\dot{B}_{3}$ will
be related to $\ddot{B}_{3}$ by a Cremona transformation representing
the passage of each root to its negative. Indeed it is the resolution
of the graph of that transformation that will determine our ultimate
$B_{3}$ and the quotient under its induced involution that will be
our ultimate $B_{3}^{\vee}$.

\section{$S_{4}\subseteq W\left(SU\left(5\right)\right)$ }

Our strategy is now to identify a group $G\leq W\left(SU\left(5\right)\right)$
such that 
\[
\frac{\mathfrak{h}_{SU\left(5\right)}^{\mathbb{C}}}{G}\supseteq B'_{3}\subseteq B_{3}
\]
gives rise to a phenomenologically correct $F$-theory model.

\subsection{Building a toric $B_{3}^{\wedge}$ from $SU\left(5\right)$ roots}

A set of simple positive roots ordered by the $A_{4}$-Dynkin diagram
is given by 
\begin{equation}
\left\{ \alpha_{i}=e_{i}-e_{i-1},\right\} _{i=1,\ldots,4}.\label{eq:pos}
\end{equation}
One immediately checks that the permutation of the axes $e_{j}$ given
by the product of transpositions $\left(e_{0}e_{4}\right)\left(e_{1}e_{3}\right)$
acts as 
\[
\left(\begin{array}{cccc}
\alpha_{1} & \alpha_{2} & \alpha_{3} & \alpha_{4}\end{array}\right)\left(\begin{array}{cccc}
0 & 0 & 0 & -1\\
0 & 0 & -1 & 0\\
0 & -1 & 0 & 0\\
-1 & 0 & 0 & 0
\end{array}\right)=\left(\begin{array}{cccc}
-\alpha_{4} & -\alpha_{3} & -\alpha_{2} & -\alpha_{1}\end{array}\right)
\]
and so is the composition of $-I_{4}$ with the unique symmetry of
the $A_{4}$-Dynkin diagram. Therefore it is the unique longest element
of $W\left(SU\left(5\right)\right)$. This symmetry fixes exactly
one of the five axes, namely the axis $e_{2}$, and therefore lies
the permutation subgroup 
\[
S_{4}=Perm\left\{ e_{0},e_{1},e_{3},e_{4}\right\} \hookrightarrow S_{5}=Perm\left\{ e_{0},e_{1},e_{2},e_{3},e_{4}\right\} =W\left(SU\left(5\right)\right)
\]
via the identification of the axis $e_{j}$ of the fundamental representation
of $SU\left(5\right)$ with the root $e_{j}-e_{2}$. We use this fact
to construct a root basis for $\mathfrak{h}_{SU\left(5\right)}^{\mathbb{C}}$
that is convenient for a toric construction of our `new' $B_{3}$.

We next project the root space along the $e_{2}$-axis to obtain the
vertices of a $3$-dimensional cube. Thus we place $e_{2}$ at $\left(0,0,0\right)$,
the center of the cube below (that we will denote as CUBE) whose eight
vertices are $\left(\text{\textpm}1,\text{\textpm}1,\text{\textpm}1\right)$.
The elements $\left\{ e_{0},e_{1},e_{3},e_{4}\right\} $ can be identified
with the vertices of the blue tetrahedron inscribed in CUBE\smallskip{}
 
\noindent \begin{center}
\includegraphics[width=3cm]{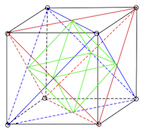} \smallskip{}
 
\par\end{center}

\noindent as follows: 
\[
\begin{array}{cc}
e_{0}: & \left(1,-1,1\right)\\
e_{1:} & \left(1,1,-1\right)\\
e_{3}: & \left(-1,-1,-1\right)\\
e_{4}: & \left(-1,1,1\right)
\end{array}
\]
That is, $\left\{ e_{0},e_{4}\right\} $ are the two `top' blue vertices
and $\left\{ e_{1},e_{3}\right\} $ are the two `bottom' blue vertices
and their negatives are the four vertices of the red tetrahedron.
Circled vertices are those of the polyhedral fan.

\subsection{The Weyl group of $SU\left(5\right)$ and the Cremona involution
as symmetries of the cube}

\noindent The group of orientation-preserving symmetries of CUBE (or
equivalently the orientation-preserving symmetries of the inscribed
green octahedron) maps isomorphically to the permutation group of
axes $\text{\textpm}e_{j}$, that is, 
\[
S_{4}=Perm\left\{ \left\{ \text{\textpm}e_{0}\right\} ,\left\{ \text{\textpm}e_{1}\right\} ,\left\{ \text{\textpm}e_{3}\right\} ,\left\{ \text{\textpm}e_{4}\right\} \right\} \subseteq S_{5}=W\left(SU\left(5\right)\right).
\]

For example, $\left(\left\{ \text{\textpm}e_{0}\right\} \left\{ \text{\textpm}e_{4}\right\} \right)\left(\left\{ \text{\textpm}e_{1}\right\} \left\{ \text{\textpm}e_{3}\right\} \right)$
is the rotation of CUBE around the vertical axis through an angle
of $\pi$. It gives the above longest element of $W\left(SU\left(5\right)\right)$.
Rotation of the cube around the diagonal axis through $\left(1,1,1\right)$
through an angle of $2\pi/3$ is the cyclic permutation $\left(\left\{ \text{\textpm}e_{0}\right\} \left\{ \text{\textpm}e_{1}\right\} \left\{ \text{\textpm}e_{4}\right\} \right)$.
A rotation with axis spanned by the midpoints of a pair of opposite
edges only flips the pair of axes given by the endpoints of the edges.
We will be especially interested in the commutator subgroup of the
involution $\left(\left\{ \text{\textpm}e_{0}\right\} \left\{ \text{\textpm}e_{4}\right\} \right)\left(\left\{ \text{\textpm}e_{1}\right\} \left\{ \text{\textpm}e_{3}\right\} \right)$
in $S_{4}$. 

Finally $A_{4}\subseteq S_{4}$ is the subgroup of orientation-preserving
symmetries of the cube that preserve the blue tetrahedron and therefore
also preserve the red tetrahedron, so that the quotient $S_{4}/A_{4}$
interchanges the two. The full symmetry group of CUBE is then generated
by adjoining the central, orientation-reversing element given by reflection
through the origin that we denote as $C$. It is the involution induced
by $C$ given by $\left(e_{i}-e_{2}\leftrightarrow e_{2}-e_{i}\right)_{i=0,1,3,4}$
on $W_{4}/B_{3}$ that will the yield the quotient $F$-theory model
$W_{4}^{\vee}/B_{3}^{\vee}$ with a $\mathbb{Z}_{2}$-action. It is
the $\mathbb{Z}_{4}$-group generated by the cyclic permutation $\left(\left\{ \text{\textpm}e_{0}\right\} \left\{ \text{\textpm}e_{1}\right\} \left\{ \text{\textpm}e_{4}\right\} \left\{ \text{\textpm}e_{3}\right\} \right)$
that will lead to an asymptotic $\mathbb{Z}_{4}$ $\mathbf{R}$ symmetry
on the semi-stable degeneration $W_{4,0}/B_{3,0}$ of $W_{4}/B_{3}$.
Notice that the Cremona involution $C\notin W\left(SU\left(5\right)\right)$,
however $C$ commutes with all the elements of $S_{4}\leq W\left(SU\left(5\right)\right)$.

\section{Toric geometry of $B_{3}$}

\subsection{Coordinatizing roots}

We choose $U\left(4\right)\subseteq SU\left(5\right)$ via the inclusion
\begin{equation}
\begin{array}{c}
U\left(4\right)\hookrightarrow SU\left(5\right)\\
A\mapsto\hat{A}:=\left(\begin{array}{cc}
\overline{\det A} & 0\\
0 & A
\end{array}\right)
\end{array}\label{eq:subU4}
\end{equation}
with maximal torus of $SU\left(5\right)$ identified with diagonal
unitary matrices $A$. Logarithms of eigenvalue functions for the
restriction of the adjoint representation 
\[
Ad_{SU\left(5\right)}:SU\left(5\right)\rightarrow GL\left(\mathfrak{su}\left(5\right)\otimes\mathbb{C}\right)
\]
to the maximal torus give the (root) linear operators on the complexified
Cartan subalgebra $\mathfrak{h}_{SU\left(5\right)}^{\mathbb{C}}=\mathfrak{h}_{SL\left(5;\mathbb{C}\right)}$.
We choose set of roots $\left\{ e_{j}-e_{2}\right\} _{j=0,1,3,4}$
as basis for $\left(\mathfrak{h}_{SU\left(5\right)}^{\mathbb{C}}\right)^{\ast}$
as in the previous section. Exponentiating we let 
\[
x,y,z,w
\]
denote the corresponding eigenvalue functions.
\begin{rem}
As is standard in the literature, we have used the letter $z$ to
denote the form whose vanishing defines the anti-canonical divisor
$S_{\mathrm{GUT}}\subseteq B_{3}$. In several places below, we will
abuse notation by also using each of the letters $x,y,z,w$ to denote
one of the homogeneous coordinates $\left[x,y,z,w\right]$ of the
$\mathbb{P}^{3}=\mathbb{P}\left(\mathfrak{h}_{A_{4}}^{\mathbb{C}}\right)$
where $\mathfrak{h}_{A_{4}}$ denotes the Cartan subalgebra of $SU\left(5\right)$.
We trust that the intended meaning of $x,y,z,w$ in each instance
will be clear from the context. 
\end{rem}

Then we can make the identification 
\begin{equation}
\begin{array}{c}
\log x=e_{0}-e_{2}\\
\log y=e_{1}-e_{2}\\
\log w=e_{3}-e_{2}\\
\log z=e_{4}-e_{2}.
\end{array}\label{eq:choice}
\end{equation}
giving a basis for the $A_{4}$ root lattice. The distinguished Weyl
chamber (\ref{eq:pos}) is given by the system of positive simple
roots 
\begin{equation}
\begin{array}{c}
\alpha_{1}=e_{1}-e_{0}=\log\left(y/x\right)\\
\alpha_{2}=e_{2}-e_{1}=\log\left(1/y\right)\\
\alpha_{3}=e_{3}-e_{2}=\log w\\
\alpha_{4}=e_{4}-e_{3}=\log\left(z/w\right).
\end{array}\label{eq:choice'}
\end{equation}
(Notice that the set (\ref{eq:choice}) of roots is not a set of simple
roots for a single Weyl chamber, however it does span the root lattice.)
We obtain $24$ of the $120$ Weyl chambers by the $24$ permutations
of $\left\{ x,y,z,w\right\} $ in (\ref{eq:choice'}). The longest
element of the Weyl group is then given by 
\[
\begin{array}{c}
y/x\leftrightarrow w/z\\
1/y\leftrightarrow1/w\\
w\leftrightarrow y\\
z/w\leftrightarrow x/y.
\end{array}
\]

Passing from roots to their negatives corresponds to the Cremona transformation
\begin{equation}
\begin{array}{c}
x\leftrightarrow\frac{1}{x}\\
y\leftrightarrow\frac{1}{y}\\
z\leftrightarrow\frac{1}{z}\\
w\leftrightarrow\frac{1}{w}
\end{array}\label{eq:exactly-1-1}
\end{equation}
that in turn corresponds to the orientation-reversing symmetry of
CUBE given by reflection through the origin.

\subsection{Tracking symmetry-breaking within the Cartan subalgebra of $E_{8}^{\mathbb{C}}$}

As mentioned above, the Tate form tracks symmetry breaking to 
\[
\frac{SU\left(5\right)_{gauge}\times SU\left(5\right)_{Higgs}}{\mathbb{Z}_{5}}\hookrightarrow E_{8}.
\]
As we have shown in \cite{Clemens-1} symmetry-breaking must be compatible
with the three-dimensional commutative diagram

\begin{equation}
\begin{array}{ccccc}
SL\left(5;\mathbb{C}\right) & \overset{\dot{\kappa}}{\hookrightarrow} & \dot{E}_{8}^{\mathbb{C}}\\
\uparrow &  & \uparrow & \nwarrow\\
SU\left(5\right) & \rightarrow & E_{8} &  & \iota\\
\downarrow &  & \downarrow & \swarrow\\
SL\left(5;\mathbb{C}\right) & \overset{\ddot{\kappa}}{\hookrightarrow} & \ddot{E}_{8}^{\mathbb{C}}
\end{array}\label{eq:spin}
\end{equation}
where the top row of (\ref{eq:spin}) is mapped to the bottom row
by the isomorphism $\iota$ given by complex conjugation where $\iota$
induces the involutions 
\begin{equation}
\begin{array}{c}
-I_{4}:\mathfrak{h}_{SL\left(5;\mathbb{C}\right)}\rightarrow\mathfrak{h}_{SL\left(5;\mathbb{C}\right)}\\
-I_{8}:\mathfrak{h}_{E_{8}^{\mathbb{C}}}\rightarrow\mathfrak{h}_{E_{8}^{\mathbb{C}}}
\end{array}\label{eq:Cartan}
\end{equation}
on complex Cartan subalgebras. Identifying maximal tori also identifies
\begin{equation}
\mathfrak{h}_{SL\left(5;\mathbb{C}\right)_{gauge}}\oplus\mathfrak{h}_{SL\left(5;\mathbb{C}\right)_{Higgs}}=\mathfrak{h}_{E_{8}^{\mathbb{C}}}\label{eq:Cartanrel}
\end{equation}
as well as inducing the inclusion 
\[
W\left(SU\left(5\right)_{gauge}\right)\times W\left(SU\left(5\right)_{Higgs}\right)\hookrightarrow W\left(E_{8}\right).
\]

While $-I_{8}$ is also the longest element the Weyl group $W\left(E_{8}\right)$,
it does not restrict to an element of the Weyl group $W\left(SU\left(5\right)_{gauge}\right)$
or an element of the Weyl group $W\left(SU\left(5\right)_{Higgs}\right)$.
However $-I_{8}$ and the pair of longest elements in $W\left(SU\left(5\right)_{gauge}\right)\times W\left(SU\left(5\right)_{Higgs}\right)$
differ by the involutive symmetries of the two $A_{4}$-Dynkin diagrams.

Our strategy will be to build $B_{3}$ and its symmetries from the
group $W\left(SU\left(5\right)_{Higgs}\right)$ acting on on the right-hand
sum in (\ref{eq:Cartanrel}). Throughout we maintain the relationship
with $W\left(E_{8}\right)$ as per (\ref{eq:Cartan}) and (\ref{eq:Cartanrel})
so that the $\mathbb{Z}_{2}$-action on $B_{3}$, that we have denoted
as $\beta_{3}$ in \cite{Clemens-1} and \cite{Clemens-3} and interchangeably
as $C_{u,v}$ below, acts as the composition of the complex conjugation
$-I_{8}$ on \ref{eq:Cartanrel} and the longest element of $W\left(SU\left(5\right)_{Higgs}\right)$
on $B_{3}$.

\subsection{CUBE as a toric polyhedral fan}

The standard toric presentation of $\mathbb{P}^{3}$ is given by a
real vector space $N_{\mathbb{R}}=N_{\mathbb{Z}}\otimes\mathbb{R}=\mathbb{R}^{3}$
with fan equal to the union of four strongly convex rational polyhedral
cones $\sigma_{xyz},\sigma_{xyw},\sigma_{xzw},\sigma_{yzw}\subseteq N_{\mathbb{R}}$
such that for the duals 
\[
S\left(\sigma\right)=\left\{ \mathbf{m}\in\mathrm{Hom}\left(N_{\mathbb{Z}},\mathbb{Z}\right):\mathbf{m}\text{·}\sigma\geq0\right\} 
\]
the associated group algebras $\mathbb{C}\left[S\left(\sigma\right)\right]$
are identified with the respective affine rings $\mathbb{C}\left[x/w,y/w,z/w\right]$,
$\mathbb{C}\left[x/w,y/z,w/z\right]$, $\mathbb{C}\left[x/y,z/y,w/y\right]$,
and $\mathbb{C}\left[y/x,z/x,w/x\right]$. The edges $e_{x},e_{y},e_{z},e_{w}$
of the fan are identified with the four divisors on $\mathbb{P}^{3}$
given by the vanishing of the respective variables.

We have two such toric presentations of $\mathbb{P}^{3}$ in CUBE,
one given by the blue tetrahedron that we will denote as 
\[
\dot{\mathbb{P}}:=\mathbb{P}_{\left[\dot{x},\dot{y},\dot{z},\dot{w}\right]}
\]
and the other given by the red tetrahedron that we will denote as
\[
\ddot{\mathbb{P}}=\mathbb{P}_{\left[\ddot{x},\ddot{y},\ddot{z},\ddot{w}\right]}.
\]
Both toric representations are given with respect to the same toric
lattice $N^{\wedge}$, the one generated by either the red four or
the blue four vertices of CUBE. $\mathbb{P}_{\left[\dot{x},\dot{y},\dot{z},\dot{w}\right]}$
has toric fan given by the vertices of the blue tetrahedron and $\mathbb{P}_{\left[\ddot{x},\ddot{y},\ddot{z},\ddot{w}\right]}$
has toric fan given by the vertices of the red tetrahedron.

This allows us to use CUBE to torically represent the resolution of
the graph of the Cremona transformation (\ref{eq:exactly-1-1}). Namely
the graph of the Cremona transformation is given by the relations
\begin{equation}
\dot{x}\ddot{x}=\dot{y}\ddot{y}=\dot{z}\ddot{z}=\dot{w}\ddot{w}=1\label{eq:Crem2}
\end{equation}
on the Zariski-open subset 
\[
\left\{ \dot{x}\text{·}\dot{y}\text{·}\dot{z}\text{·}\dot{w}\text{\ensuremath{\neq}}0\right\} \times\left\{ \ddot{x}\text{·}\ddot{y}\text{·}\ddot{z}\text{·}\ddot{w}\text{\ensuremath{\neq}}0\right\} \subseteq\mathbb{P}_{\left[\dot{x},\dot{y},\dot{z},\dot{w}\right]}\times\mathbb{P}_{\left[\ddot{x},\ddot{y},\ddot{z},\ddot{w}\right]}.
\]
We define 
\[
B_{3}^{\wedge}:=\left\{ \left(\left[\dot{x},\dot{y},\dot{z},\dot{w}\right],\left[\ddot{x},\ddot{y},\ddot{z},\ddot{w}\right]\right)\in\mathbb{P}_{\left[\dot{x},\dot{y},\dot{z},\dot{w}\right]}\times\mathbb{P}_{\left[\ddot{x},\ddot{y},\ddot{z},\ddot{w}\right]}:\dot{x}\ddot{x}=\dot{y}\ddot{y}=\dot{z}\ddot{z}=\dot{w}\ddot{w}=1\right\} 
\]
as simply the closure of the graph of the Cremona transformation.

$B_{3}^{\wedge}$ is a toric manifold with respect to the same toric
lattice $N^{\wedge}$. The polyhedral fan has vertices at the eight
vertices of CUBE together with the six additional points $\left(\text{\textpm2,0,0}\right)$,
$\left(0,\text{\textpm2,0}\right)$, and $\left(0,0,\text{\textpm2}\right)$.
These fourteen vertices correspond to the fourteen toroidal divisors
whose sum is the anticanonical divisor of $B_{3}^{\wedge}$ . The
inclusion of cones generate two birational morphisms 
\[
\begin{array}{c}
\dot{\pi}:B_{3}^{\wedge}\rightarrow\dot{\mathbb{P}}\\
\ddot{\pi}:B_{3}^{\wedge}\rightarrow\ddot{\mathbb{P}}.
\end{array}
\]
The Cremona involution is given toroidally by the reflection $C$
of CUBE through the origin.

To further describe the toroidal divisors, we denote the red vertices
of the fan as $e_{\dot{x}\ddot{y}\ddot{z}\ddot{w}},e_{\dot{y}\ddot{x}\ddot{z}\ddot{w}},e_{\dot{z}\ddot{x}\ddot{y}\ddot{w}},e_{\dot{w}\ddot{x}\ddot{y}\ddot{z}}$
and the blue vertices as $e_{\dot{x}\dot{y}\dot{z}\ddot{w}},e_{\dot{x}\dot{y}\dot{w}\ddot{z}},e_{\dot{x}\dot{z}\dot{w}\ddot{y}},e_{\dot{y}\dot{z}\dot{w}\ddot{x}}$.
Set 
\[
e_{\dot{x}\dot{z}\ddot{y}\ddot{w}}
\]
the vertex above CUBE that lies on the ray through the origin that
bisects the segment joining $e_{\dot{x}\ddot{y}\ddot{z}\ddot{w}}$
and $e_{\dot{z}\ddot{x}\ddot{y}\ddot{w}}$. This same ray bisects
the segment joining $e_{\dot{x}\dot{y}\dot{z}\ddot{w}}$ and $e_{\dot{x}\dot{z}\dot{w}\ddot{y}}.$
This choice will force the top vertices of the cube to be 
\[
e_{\dot{x}\ddot{y}\ddot{z}\ddot{w}},e_{\dot{x}\dot{y}\dot{z}\ddot{w}},e_{\dot{z}\ddot{x}\ddot{y}\ddot{w}},e_{\dot{x}\dot{z}\dot{w}\ddot{y}}.
\]
We use the analogous notation for the other five possible $\left(2,2\right)$-partitions.
We obtain the toroidal fan:\smallskip{}
 
\begin{center}
\includegraphics[width=8cm]{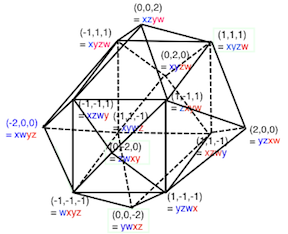} \smallskip{}
\par\end{center}

\begin{flushleft}
The toroidal divisors are given by the vertices of the polyhedral
fan pictured above where the blue-red colorations of the variables
in the monomial 
\[
xyzw
\]
correspond to the decorations $\left\{ \,\dot{},\,\ddot{}\,\right\} $
in $\mathbb{P}_{\left[\dot{x},\dot{y},\dot{z},\dot{w}\right]}\times\mathbb{P}_{\left[\ddot{x},\ddot{y},\ddot{z},\ddot{w}\right]}$. 
\par\end{flushleft}

Passing from roots to their negatives corresponds to the orientation-reversing
symmetry of the above cube given by reflection through the origin.
The reflection also interchanges the blue tetrahedron with the red
one. In fact, the full subgroup $S_{4}\subseteq W\left(SU\left(5\right)\right)$
of symmetries of CUBE acts on the set of decorated monomials, dividing
them into sets with an even number of blue variables and sets with
an odd number of blue variables. The toroidal divisors are just the
restriction to $B_{3}^{\wedge}$ of the divisors 
\begin{equation}
\begin{array}{c}
E_{\dot{x}\ddot{y}\ddot{z}\ddot{w}}:=\left\{ \dot{x}=0\right\} \times\left\{ \ddot{y}=\ddot{z}=\ddot{w}=0\right\} \subseteq\dot{\mathbb{P}}\times\ddot{\mathbb{P}}\\
E_{\dot{x}\dot{z}\ddot{y}\ddot{w}}:=\left\{ \dot{x}=\dot{z}=0\right\} \times\left\{ \ddot{y}=\ddot{w}=0\right\} \subseteq\dot{\mathbb{P}}\times\ddot{\mathbb{P}}\\
E_{\dot{y}\dot{z}\dot{w}\ddot{x}}:=\left\{ \dot{y}=\dot{z}=\dot{w}=0\right\} \times\left\{ \ddot{x}=0\right\} \subseteq\dot{\mathbb{P}}\times\ddot{\mathbb{P}}
\end{array}\label{eq:divisors}
\end{equation}
etc., where of course $E_{\text{·}}$ is the divisor given in toric
notation by $e_{\text{·}}$. For example, the toric dictionary gives
\[
E_{\dot{x}\ddot{y}\ddot{z}\ddot{w}}\leftrightarrow e_{\dot{x}\ddot{y}\ddot{z}\ddot{w}}
\]
and the normal bundle to $E_{\dot{x}\ddot{y}\ddot{z}\ddot{w}}$ in
$\tilde{\mathbb{P}}^{3}$ is 
\[
\mathcal{O}_{E_{\dot{x}\ddot{y}\ddot{z}\ddot{w}}}\left(-1,-1\right).
\]
So by adjunction 
\begin{equation}
K_{\tilde{\mathbb{P}}^{3}}\text{·}E_{\dot{x}\ddot{y}\ddot{z}\ddot{w}}=\mathcal{O}_{E_{\dot{x}\ddot{y}\ddot{z}\ddot{w}}}\left(-1,-1\right).\label{eq:adjunction}
\end{equation}
If we classify the above components (\ref{eq:divisors}) to be of
type $\left(1,3\right)$, $\left(2,2\right)$, and $\left(3,1\right)$
respectively, there are four divisors of type $\left(3,1\right)$,
four divisors of type $\left(1,3\right)$ and six divisors of type
$\left(2,2\right)$. All non-empty intersections occur as intersections
of a component of type $\left(2,2\right)$ with a component of type
$\left(1,3\right)$ or $\left(3,1\right)$ obtained by changing the
decoration on exactly one of its four variables, for example 
\[
E_{\dot{x}\dot{y}\dot{z}\ddot{w}}\cap E_{\dot{x}\dot{y}\ddot{z}\ddot{w}}=\left\{ \dot{x}=\dot{y}=\dot{z}=\ddot{z}=0\right\} .
\]
There are exactly $24=\frac{2\text{·}4!}{2}$ such intersections,
$12$ projecting to a vertex in $\dot{\mathbb{P}}$ and an edge in
$\ddot{\mathbb{P}}$ and $12$ projecting to a vertex $\ddot{\mathbb{P}}$
and an edge in $\dot{\mathbb{P}}$. The divisors of type $\left(3,1\right)$
are the four vertex rays in $N_{\mathbb{R}}^{\wedge}$ in the original
toric description of $\mathbb{P}^{3}$, those of type $\left(1,3\right)$
are the rays through the barycenters of the cones and those of type
$\left(2,2\right)$ are the rays through the barycenters of the faces.

The anti-canonical bundle of $B_{3}^{\wedge}$ is represented by $14$-hedron
given by the (reduced) support of the total transform of the tetrahedron
in $\dot{\mathbb{P}}$ or in $\ddot{\mathbb{P}}$ , that is 
\begin{equation}
\begin{array}{c}
K_{B_{3}^{\wedge}}^{-1}:=E_{\dot{x}\ddot{y}\ddot{z}\ddot{w}}+E_{\dot{y}\ddot{x}\ddot{z}\ddot{w}}+E_{\dot{z}\ddot{x}\ddot{y}\ddot{w}}+E_{\dot{w}\ddot{x}\ddot{y}\ddot{z}}\\
+E_{\dot{x}\dot{z}\ddot{y}\ddot{w}}+E_{\dot{x}\dot{y}\ddot{z}\ddot{w}}+E_{\dot{x}\dot{w}\ddot{y}\ddot{z}}+E_{\dot{y}\dot{z}\ddot{x}\ddot{w}}+E_{\dot{y}\dot{w}\ddot{x}\ddot{z}}+E_{\dot{z}\dot{w}\ddot{x}\ddot{y}}\\
+E_{\dot{y}\dot{z}\dot{w}\ddot{x}}+E_{\dot{x}\dot{z}\dot{w}\ddot{y}}+E_{\dot{x}\dot{y}\dot{w}\ddot{z}}+E_{\dot{x}\dot{y}\dot{z}\ddot{w}}.
\end{array}\label{eq:14gon}
\end{equation}

\subsection{Toric quotients of $B_{3}^{\wedge}$}

Next define the `over-lattice' 
\[
N_{\mathbb{Z}}=\left\{ \left(\frac{a}{2}+\frac{b}{2}+\frac{c}{2}\right):\left(a,b,c\right)\in N^{\wedge},\,a+b+c\equiv_{2}0\right\} \supseteq N^{\wedge}
\]
inducing a toric quotient of $B_{3}^{\wedge}$ . With respect to the
lattice $N_{\mathbb{Z}}$, the polyhedral fan generated by the polyhedral
fan for $B_{3}^{\wedge}$ becomes 
\begin{center}
\smallskip{}
 
\par\end{center}

\begin{center}
\includegraphics[width=3cm]{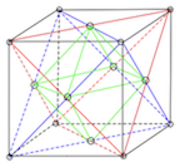} \smallskip{}
 
\par\end{center}

\noindent where the vertices of the fan are circled. 
\begin{flushleft}
Now the six red-blue-green crossing points generate $N_{\mathbb{Z}}$.
The green octahedron with vertices at the six red-blue-green crossing
points \smallskip{}
 
\par\end{flushleft}

\noindent \begin{center}
\includegraphics[width=3cm]{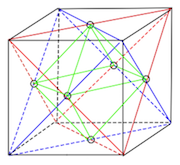} 
\par\end{center}

\noindent \begin{flushleft}
\smallskip{}
 is the toric representation of 
\[
\mathbb{P}_{u,v}:=\mathbb{P}_{\left[u_{0},v_{0}\right]}\times\mathbb{P}_{\left[u_{1},v_{1}\right]}\times\mathbb{P}_{\left[u_{2},v_{2}\right]}
\]
where 
\[
\begin{array}{c}
\left[u_{0},v_{0}\right]=\left[\frac{xz}{yw},\frac{yw}{xz}\right]\\
\left[u_{1},v_{1}\right]=\left[\frac{xy}{zw},\frac{zw}{xy}\right]\\
\left[u_{2},v_{2}\right]=\left[\frac{xw}{yz},\frac{yz}{xw}\right]
\end{array}
\]
and again circled vertices are those of the polyhedral fan. The toric
$\mathbb{P}_{u,v}$ is invariant under the the action of the longest
element $\left(\left(e_{0}e_{4}\right)\left(e_{1}e_{3}\right)\right)$
of $W\left(SU\left(5\right)\right)$, namely the toric involution
given by 
\[
\begin{array}{ccc}
\mathbb{P}_{\left[u_{1},v_{1}\right]}\times\mathbb{P}_{\left[u_{2},v_{2}\right]} & \overset{\left(\left(e_{0}e_{4}\right)\left(e_{1}e_{3}\right)\right)}{\longrightarrow} & \mathbb{P}_{\left[u_{1},v_{1}\right]}\times\mathbb{P}_{\left[u_{2},v_{2}\right]}\\
\left(\left[u_{1},v_{1}\right],\left[u_{2},v_{2}\right]\right) & \mapsto & \left(\left[v_{1},u_{1}\right],\left[v_{2},u_{2}\right]\right).
\end{array}
\]
\par\end{flushleft}

We will distinguish the 'vertical' $\mathbb{P}^{1}=\mathbb{P}_{\left[u_{0},v_{0}\right]}$.
Thus $\left\{ u_{0}=0\right\} $ will correspond to the 'top' of the
cube and $\left\{ v_{0}=0\right\} $ will correspond to the 'bottom.'
Our distinguished 'vertical' $\mathbb{P}_{\left[u_{0},v_{0}\right]}$
in CUBE is the one spanned by 
\[
f_{\dot{x}\dot{z}\ddot{y}\ddot{w}}:=\frac{1}{2}\text{·}e_{\dot{x}\dot{z}\ddot{y}\ddot{w}}\,and\,f_{\dot{y}\dot{w}\ddot{x}\ddot{z}}=-f_{\dot{x}\dot{z}\ddot{y}\ddot{w}}
\]
that is, this $\mathbb{P}^{1}$ is the one corresponding to the partition
$\left(\left\{ x,z\right\} ,\left\{ y,w\right\} \right)$ of $\left\{ x,y,z,w\right\} $.
Analogously we have $\mathbb{P}_{\left[u_{1},v_{1}\right]}$ corresponding
to the fan vertices 
\[
f_{\dot{x}\dot{y}\ddot{z}\ddot{w}}\,and\,f_{\dot{z}\dot{w}\ddot{x}\ddot{y}}=-f_{\dot{x}\dot{y}\ddot{z}\ddot{w}},
\]
and finally $P_{\left[u_{2},v_{2}\right]}$ corresponding to the fan
\[
f_{\dot{x}\dot{w}\ddot{y}\ddot{z}}\,and\,f_{\dot{y}\dot{z}\ddot{x}\ddot{w}}=-f_{\dot{x}\dot{w}\ddot{y}\ddot{z}}.
\]

The following Lemma then follows immediately from the toroidal picture
together with the fact that the Cremona involution on $B_{3}^{\wedge}$
reverses the decorations on the $xyzw$-monomials, whereas the action
of $S_{4}$ permutes the variables $x,y,z,w$. 
\begin{lem}
\label{lem:i)-The-involution}i) The involution 
\[
C_{u,v}:\mathbb{P}_{u,v}\rightarrow\mathbb{P}_{u,v}
\]
induced by the Cremona involution $C$ on $B_{3}^{\wedge}$ is given
by 
\[
\left[u_{j},v_{j}\right]\mapsto\left[v_{j},u_{j}\right]
\]
for $j=0,1,2$.

ii) The $\mathbb{Z}_{4}$-action by the cyclic permutation
\begin{equation}
T_{0}:=\left(\left\{ \text{\textpm}e_{0}\right\} \left\{ \text{\textpm}e_{1}\right\} \left\{ \text{\textpm}e_{4}\right\} \left\{ \text{\textpm}e_{3}\right\} \right)\label{eq:tee}
\end{equation}
is a rotation of $90\text{°}$ of the green octahedron above in this
Subsection around its vertical axis. 

iii) The cyclic permutation $\left(\left\{ \text{\textpm}e_{0}\right\} \left\{ \text{\textpm}e_{1}\right\} \left\{ \text{\textpm}e_{4}\right\} \left\{ \text{\textpm}e_{3}\right\} \right)$
is the $\mathbb{Z}_{4}$-action on $\mathbb{P}_{\left[u_{1},v_{1}\right]}\times\mathbb{P}_{\left[u_{2},v_{2}\right]}$
generated by
\[
\left[u_{1},v_{1}\right],\left[u_{2},v_{2}\right]\overset{T_{0}}{\longrightarrow}\left[u_{2},v_{2}\right],\left[v_{1},u_{1}\right]
\]
that we also denote as $T_{0}$. 

iv) Replacing the homogeneous coordinates $\left[u_{0},v_{0}\right]$
for $\mathbb{P}_{\left[u_{0},v_{0}\right]}$ with the single affine
coordinate $\frac{u_{0}-v_{0}}{u_{0}+v_{0}}$, there exists a $\mathbb{Z}_{4}$-action
that we denote as $T_{u,v}$ on $B_{3}=\mathbb{P}_{\left[u_{0},v_{0}\right]}\times B_{2}$
defined by
\begin{equation}
\left(\frac{u_{0}-v_{0}}{u_{0}+v_{0}},\left[u_{1},v_{1}\right],\left[u_{2},v_{2}\right]\right)\overset{T_{u,v}}{\longrightarrow}\left(i\text{·}\left(\frac{u_{0}-v_{0}}{u_{0}+v_{0}}\right),\,T_{0}\left(\left[u_{1},v_{1}\right],\left[u_{2},v_{2}\right]\right)\right).\label{eq:cee}
\end{equation}

iii) Furthermore
\end{lem}

\begin{equation}
\begin{array}{c}
T_{u,v}^{2}\left(\frac{u_{0}-v_{0}}{u_{0}+v_{0}},\left[u_{1},v_{1}\right],\left[u_{2},v_{2}\right]\right)=\left(\frac{v_{0}-u_{0}}{u_{0}+v_{0}},\left[v_{1},u_{1}\right],\left[v_{2},u_{2}\right]\right).\\
=C_{u,v}\left(\frac{u_{0}-v_{0}}{u_{0}+v_{0}},\left[u_{1},v_{1}\right],\left[u_{2},v_{2}\right]\right).
\end{array}.\label{eq:comm2}
\end{equation}

\section{\label{sec:,--and}$B_{2}$, $B_{3}$ and their symmetries}

The toroidal blow-up of $\mathbb{P}_{u,v}$ at the eight points of
\[
\left\{ u_{0}v_{0}=u_{1}v_{1}=u_{2}v_{2}=0\right\} 
\]
as shown in the previous Section cannot be chosen for $B_{3}$ since
its anti-canonical line bundle is far from ample. In fact it is given
by sections of $\mathcal{\mathcal{O}}_{\mathbb{P}_{u,v}}\left(2,2,2\right)$
that vanish to second order at the eight vertices $\left\{ u_{0}v_{0}=u_{1}v_{1}=u_{2}v_{2}=0\right\} $
and therefore also vanish to first order along all the edges of CUBE.
So its anti-canonical linear system will not be basepoint-free which
will be necessary for our geometric model. Furthermore it has 
\[
\left(K^{-1}\right)^{3}=-16.
\]

On the other hand $K_{\mathbb{P}_{u,v}}^{-1}$ is ample with 
\[
\left(K_{\mathbb{P}_{u,v}}^{-1}\right)^{3}=48.
\]
In our application to $W_{4}/B_{3}$ we will need the linear system
$\left|K_{B_{3}}^{-1}\right|$ basepoint-free and, in particular,
for three-generation we will need 
\begin{equation}
\left(K_{B_{3}}^{-1}\right)^{3}=12.\label{eq:justright}
\end{equation}

To achieve (\ref{eq:justright}), we will modify 
\[
\mathbb{P}_{B_{2}}:=\mathbb{P}_{\left[u_{1},v_{1}\right]}\times\mathbb{P}_{\left[u_{2},v_{2}\right]}.
\]
Since each blown up point on $\mathbb{P}_{B_{2}}$ reduces $\left(K_{B_{3}}^{-1}\right)^{3}$
by six, we will need to blow up $\mathbb{P}_{B_{2}}$ at six points.
Furthermore, in order that the $F$-theory model incorporate an eventual
$\mathbb{Z}_{4}$-action that induces asymptotically a $\mathbb{Z}_{4}$
$\mathbf{R}$-symmetry, the six points will have to comprise the union
of two orbits of the $\mathbb{Z}_{4}$-action. 

The fan of the toric representation of $\mathbb{P}_{B_{2}}$ is the
horizontal square of the green octohedron in the above figures also
shown as the tilted square in the diagram below. We blow up $\mathbb{P}_{B_{2}}$
torically by adjoining the vertices $\left(1,1\right)$ and $\left(-1,-1\right)$
to its fan to obtain the toric fan of $D_{6}$. But this polyhedral
fan can also be viewed as the toric fan $\mathbb{P}^{2}=\mathbb{P}_{\left[a,b,c\right]}$
blown up at the three points with only one non-zero coordinate. It
is obtained by adjoining the three vertices $\left(1,1\right)$,$\left(-1,0\right)$
and $\left(0,-1\right)$ to the isosceles triangle fan of $\mathbb{P}_{\left[a,b,c\right]}$.

\smallskip{}

\noindent \begin{center}
\includegraphics[scale=0.4]{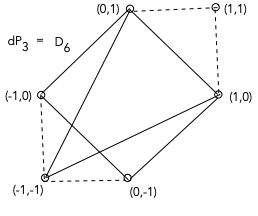}
\par\end{center}

\noindent \begin{flushleft}
\smallskip{}
In fact this isomorphism is given explicitly by the correspondence
\par\end{flushleft}

\noindent \begin{center}
\smallskip{}
\begin{tabular}{|c|c|c|}
\hline 
$\mathbb{P}_{\left[a,b,c\right]}$ & $\mathbb{P}_{B_{2}}$ & Fan vertex\tabularnewline
\hline 
\hline 
blow up $\left\{ a=b=0\right\} $ & blow up $\left\{ u_{1}=u_{2}=0\right\} $ & $\left(1,1\right)$\tabularnewline
\hline 
proper transform $\left\{ a=0\right\} $ & proper transform $\left\{ u_{2}=0\right\} $ & $\left(1,0\right)$\tabularnewline
\hline 
blow up $\left\{ a=c=0\right\} $ & proper transform $\left\{ v_{1}=0\right\} $ & $\left(0,-1\right)$\tabularnewline
\hline 
proper transform $\left\{ c=0\right\} $ & blow up $\left\{ v_{1},v_{2}=0\right\} $ & $\left(-1,-1\right)$\tabularnewline
\hline 
blow up $\left\{ b=c=0\right\} $ & proper transform $\left\{ v_{2}=0\right\} $ & $\left(-1,0\right)$\tabularnewline
\hline 
proper transform $\left\{ b=0\right\} $ & proper transform $\left\{ u_{1}=0\right\} $ & $\left(0,1\right)$\tabularnewline
\hline 
\end{tabular}\smallskip{}
\par\end{center}

We next analyze in detail the construction of the del Pezzo surface
$B_{2}$ for $B_{3}=\mathbb{P}_{\left[u_{0},v_{0}\right]}\times B_{2}$. 

\subsection{The del Pezzo $B_{2}$ and its symmetries}

To achieve a Fano $B_{3}=B_{2}\times\mathbb{P}_{\left[u_{0},v_{0}\right]}$
necessary for the $F$-theory dual, $B_{2}$ must be a del Pezzo surface.
By Castelnuovo's Rationality Theorem, the $\mathbb{Z}_{2}$-action
on $B_{2}$, that we call $\beta_{2}$, cannot be free. On the other
hand, $\beta_{2}$ must have at most finite fixpoint set since otherwise
the $\mathbb{Z}_{2}$-action $\beta_{3}$ on $B_{3}$ cannot act freely
on the ample anti-canonical section $S_{\mathrm{GUT}}\subseteq B_{3}$,
a necessary condition for an $F$-theory model with Wilson-line symmetry
breaking. Now by Table 6 and Figure 10 in \cite{Blumenhagen}, there
is one and only one sequence of del Pezzo surfaces with involution
having finite fixpoint set. These are the four entries in Table 6
that have no entry in either the $\Sigma$ column nor in the $\mathbf{R}$
column. The sequence is represented in Figure 10 by the left vertical
column that begins with $\mathbb{P}_{\left[u_{1},v_{1}\right]}\times\mathbb{P}_{\left[u_{2},v_{2}\right]}$
and proceeds by blowing up three additional pairs of points to obtain
the phenomenologically desirable $dP_{7}=D_{2}$, the standard mathematical
notation for the family of del Pezzo surfaces whose anti-canonical
divisor has self-intersection $2$.\footnote{More precisely, 'phenomenologically desirable' equates to '$3$-generation,
one Higgs doublet, and no vector-like exotics.' } 

We will also need the $\mathbb{Z}_{4}$-symmetry inherited from a
square root of the longest element of the Weyl group of $SU\left(5\right)$
as in Lemma \ref{lem:i)-The-involution} $T_{0}$ acts as 
\begin{equation}
u_{1}\mapsto u_{2}\mapsto v_{1}\mapsto v_{2}\mapsto u_{1}\label{eq:4sym}
\end{equation}
on the blown up $\mathbb{P}_{\left[u_{1},v_{1}\right]}\times\mathbb{P}_{\left[u_{2},v_{2}\right]}$.Whether
or not $B_{2}$ is del Pezzo will, as we see next, depends on on the
choice of the orbit of the $\mathbb{Z}_{4}$-symmetry on $\mathbb{P}_{\left[u_{1},v_{1}\right]}\times\mathbb{P}_{\left[u_{2},v_{2}\right]}$
generated by $T_{0}$, a cyclic element of the Weyl group of $SU\left(5\right)$
whose square is its longest element. 
\begin{thm}
\label{thm:For-generic-choice}For generic choice of the orbit of
the action \eqref{eq:4sym}, the resulting $B_{2}$ is a del Pezzo
surface.
\end{thm}

\begin{proof}
The proof will first transfer the $\mathbb{Z}_{4}$-action from $\mathbb{P}_{\left[u_{1},v_{1}\right]}\times\mathbb{P}_{\left[u_{2},v_{2}\right]}$
to $\mathbb{P}_{\left[a,b,c\right]}$ and then check general position
for points of an orbit, that is, no three points of the seven blown
up point lie on a line and no six points lie on a conic. The birational
passage from $\mathbb{P}_{\left[u_{1},v_{1}\right]}\times\mathbb{P}_{\left[u_{2},v_{2}\right]}$
to $\mathbb{P}_{\left[a,b,c\right]}$ is given as follows. An orbit
of the $\mathbb{Z}_{4}$-action \eqref{eq:4sym} on $\mathbb{P}_{\left[u_{1},v_{1}\right]}\times\mathbb{P}_{\left[u_{2},v_{2}\right]}$
can be written as
\[
\begin{array}{c}
\left(\left[u_{1},v_{1}\right],\left[u_{2},v_{2}\right]\right)=\left(\left[\frac{u_{1}}{v_{1}},1\right],\left[\frac{u_{2}}{v_{2}},1\right]\right)\\
\left(\left[u_{2},v_{2}\right],\left[v_{1},u_{1}\right]\right)=\left(\left[\frac{u_{2}}{v_{2}},1\right],\left[\frac{v_{1}}{u_{1}},1\right]\right)\\
\left(\left[v_{1},u_{1}\right],\left[v_{2},u_{2}\right]\right)=\left(\left[\frac{v_{1}}{u_{1}},1\right],\left[\frac{v_{2}}{u_{2}},1\right]\right)\\
\left(\left[v_{2},u_{2}\right],\left[u_{1},v_{1}\right]\right)=\left(\left[\frac{v_{2}}{u_{2}},1\right],\left[\frac{u_{1}}{v_{1}},1\right]\right)
\end{array}
\]
that translates to 
\begin{equation}
\begin{array}{c}
\left(a,b\right)\\
\left(b,a^{-1}\right)\\
\left(a^{-1},b^{-1}\right)\\
\left(b^{-1},a\right)
\end{array}\label{eq:4aff}
\end{equation}
that together with $\left(0,0\right)$, $\left(\infty,0\right)$ and
$\left(0,\infty\right)$ become the seven blown-up points in $\mathbb{P}_{\left[a,b,c\right]}$
by setting $c=1$. One immediately checks that for general choice
of $a$ and $b$ the slopes of the four points in \eqref{eq:4aff}
are distinct and non-zero. So no line containing two of the points
that are blown up contains any of the $5$ other points that are blown
up. Also conics through $6$ of the $7$ points fall into one of the
two cases: 

1) Parabolas with horizontal or vertical major axis containing $\left(\left(0.0\right)\right)$
and so of the form

\begin{equation}
y=cx\left(x-2d\right)\label{eq:4orbit}
\end{equation}
containing all four points of \eqref{eq:4aff}. Therefore 
\[
\begin{array}{c}
\left(a,b\right):\,\,b=ca\left(a-2d\right)\,\,c=\frac{b}{a\left(a-2d\right)}\\
\left(b,a^{-1}\right):\,\,a^{-1}=cb\left(b-2d\right)\,\,c=\frac{a^{-1}}{b\left(b-2d\right)}\\
\left(a^{-1},b^{-1}\right):\,\,b^{-1}=ca^{-1}\left(a^{-1}-2d\right)\,\,c=\frac{b^{-1}}{a^{-1}\left(a^{-1}-2d\right)}\\
\left(b^{-1},a\right):\,\,a=cb^{-1}\left(b^{-1}-2d\right)\,\,c=\frac{a}{b^{-1}\left(b^{-1}-2d\right)}
\end{array}
\]
so by high school algebra we obtain a polynomial relationship between
$a$ and $b$
\[
\begin{array}{c}
\frac{b}{a\left(a-2d\right)}=\frac{a^{-1}}{b\left(b-2d\right)}:\,\,\frac{b^{3}-a}{b^{2}-1}=2d\\
\frac{b}{a\left(a-2d\right)}=\frac{b^{-1}}{a^{-1}\left(a^{-1}-2d\right)}:\,\,\frac{ba^{-2}-b^{-1}a^{2}}{ba^{-1}-b^{-1}a}=2d\\
a\left(b^{3}-a\right)\left(b^{2}-a^{2}\right)=\left(b^{2}-1\right)\left(b^{2}-a^{4}\right).
\end{array}
\]

2) Hyperbolae with the two coordinate axes as asymptotes and so of
the form
\begin{equation}
\left(x-c\right)\left(y-d\right)=e\label{eq:inj1}
\end{equation}
containing all four points of \eqref{eq:4aff} or 
\begin{equation}
xy-\left(cy+dx\right)=0\label{eq:bdc2}
\end{equation}
containing three four points of \eqref{eq:4aff}. In the case of \eqref{eq:inj1}
\[
\begin{array}{c}
\left(a,b\right):\,\,\left(a-c\right)\left(b-d\right)=e\\
\left(b,a^{-1}\right):\,\,\left(b-c\right)\left(1-ad\right)=ae\\
\left(a^{-1},b^{-1}\right):\,\,\left(1-ac\right)\left(1-bd\right)=abe\\
\left(b^{-1},a\right):\,\,\left(1-bc\right)\left(a-d\right)=be.
\end{array}
\]
So
\[
\begin{array}{c}
a\left(a-c\right)\left(b-d\right)=\left(b-c\right)\left(1-ad\right)\\
ab\left(a-c\right)\left(b-d\right)=\left(1-ac\right)\left(1-bd\right)\\
b\left(a-c\right)\left(b-d\right)=\left(1-bc\right)\left(a-d\right)
\end{array}
\]
and so
\[
\begin{array}{c}
d=\frac{\left(b-c\right)-a\left(a-c\right)b}{\left(b-c\right)a-a\left(a-c\right)}\\
d=\frac{1-ac-ab\left(a-c\right)b}{\left(1-ac\right)b-ab\left(a-c\right)}\\
d=\frac{a\left(1-bc\right)-b^{2}\left(a-c\right)}{\left(1-bc\right)-b\left(a-c\right)}.
\end{array}
\]
As above this gives us two linear equations in $c$ with coefficients
that are polynomials in $\left(a,b\right)$. Again as above solving
each for $c$and setting the two expressions for $c$ equal to each
other we get a non-trivial polynomial equation in $\left(a,b\right)$
that must be satisfied so that this conic passes through the six given
points. The case \eqref{eq:bdc2} is similar. We have three linear
equations in $\left(c,d\right)$. Use pairs of these to eliminate
$d$ and then set the two expression for $c$ equal to each other
to get a non-trivial polynomial relation in $\left(a,b\right)$ that
must be satisfied.

We have therefore shown that if $\left(a,b\right)$ does not satisfy
any of the above finite number of polynomial equations, then $D_{6}$
blown up at the orbit of $\left(a,b\right)$ under the $\mathbb{Z}_{4}$-action
is a del Pezzo surface.
\end{proof}
Thus for generic choice of $\left(a,b\right)$ the resulting $dP_{7}$
is in fact a del Pezzo surface $D_{2}$. We define
\[
B_{2}=D_{2}
\]
with 
\[
B_{3}=\mathbb{P}_{\left[u_{0},v_{0}\right]}\times B_{2}.
\]
Then we will consider the specialization of \eqref{eq:4sym} under
the specialization
\[
\left(B_{3}=\mathbb{P}_{\left[u_{0},v_{0}\right]}\times B_{2}\right)\Rightarrow\left(B_{3,0}=\left(\mathbb{P}_{\left[1,a\right]}\cup\mathbb{P}_{\left[1,b\right]}\right)\times B_{2}\right)
\]
to the semi-stable degeneration such that the limit that encodes the
structure of the Heterotic model. $T_{u,v}$ will act equivariantly
on our degenerating family of Calabi-Yau fourfolds that we will denote
$W_{4,\delta}/B_{3,\delta}$ with $\delta$ denoting the parameter
for the degenerating family of fourfolds. Actually for all $\delta\neq0$
\[
B_{3}=B_{2}\times\mathbb{P}_{\left[u_{0},v_{0}\right]}\cong B_{3,\delta}
\]
but at $\delta=0$ , while $B_{2}$ remains stationary, we will have
the $\mathbb{P}^{1}$-splitting
\[
\mathbb{P}_{\left[u_{0},v_{0}\right]}\Rightarrow\mathbb{P}_{\left[1,a\right]}\cup\mathbb{P}_{\left[1,b\right]}
\]
that will force $W_{4,0}$ to split into two components $dP_{a}/\left(B_{2}\times\mathbb{P}_{\left[1,a\right]}\right)$
and $dP_{b}/\left(B_{2}\times\mathbb{P}_{\left[1,b\right]}\right)$.
We will thereby equip the semi-stable limit 
\begin{equation}
\begin{array}{c}
W_{4,0}/B_{2}=\left(dP_{a}\cup dP_{b}\right)/B_{2}\\
V_{3}/B_{2}=\left(dP_{a}\cap dP_{b}\right)/B_{2}
\end{array}\label{eq:semi}
\end{equation}
with an 'asymptotic $\mathbb{Z}_{4}$ \textbf{R}-symmetry.' The action
of the element of order $2$ in $\mathbb{Z}_{4}$ is simply the $\mathbb{Z}_{2}$-action
$\beta_{3}$ employed in \eqref{eq:F-th} and papers \cite{Clemens-1,Clemens-3}.
This passage to the semistable limit will be defined and described
in detail in Section \ref{subsec:Modification-for-} of this paper.

\subsection{Eigenvectors and eigenvalues for the $\mathbb{Z}_{4}$ symmetry on
$B_{3}$}

By Theorem \ref{thm:For-generic-choice}, blowing up $D_{6}$ at the
four additional points \eqref{eq:4aff} of a generic orbit of $T_{0}$
yields a del Pezzo $D_{2}$ if the four points are in general position.
Since $D_{r}$ for $r<6$ is no longer toric, we have to leave the
family of toric varieties in order to achieve a smooth threefold with
the correct numerical invariants. Now four points on $\mathbb{P}_{\left[a,b,c\right]}$
in general position are the base locus of a pencil of conics on which
$T_{0}$ also acts. This action has two fixpoints $q_{1}\left(a,b,c\right)$
and $q_{2}\left(a,b,c\right)$ and the blow-up of $D_{6}$ at the
four points given us by the smooth divisor
\begin{equation}
\left\{ \left|\begin{array}{cc}
q_{1} & q_{2}\\
k & l
\end{array}\right|=0\right\} \subseteq D_{6}\times\mathbb{P}_{\left[k,l\right]}.\label{eq:4blow}
\end{equation}

$T_{0}$ also acts equivariantly on the mapping
\begin{equation}
B_{2}\rightarrow\mathbb{P}\left(H^{0}\left(K_{B_{2}}^{-1}\right)\right)\cong\mathbb{P}^{2}\label{eq:tB}
\end{equation}
that turns out to be a double cover with branch locus a smooth quartic
plane curve $R\subseteq\mathbb{P}\left(H^{0}\left(K_{B_{2}}^{-1}\right)\right)$.
We define $q_{\left[1,0,0\right]},q_{\left[0,1,0\right]},q_{\left[0,0,1\right]}$
as the conic in the pencil \eqref{eq:4blow} containing the point
indicated by its respective subscript. Then
\[
\begin{array}{c}
a\cdot q_{\left[1,0,0\right]}\\
b\cdot q_{\left[0,1,0\right]}\\
c\cdot q_{\left[0,0,1\right]}
\end{array}
\]
are a basis for the anti-canonical linear system of $B_{2}$. It turns
out that 
\[
\begin{array}{c}
t_{1}=\log a\\
t_{2}=\log b\\
t_{3}=t_{1}=\log c
\end{array}
\]
are natual coordinates for $H^{0}\left(K_{B_{2}}^{-1}\right)$ so
that we desire the choice of $B_{2}$ to be such that , referring
to \eqref{eq:4aff}, we look for 
\[
B_{2}=\left\{ t_{0}^{2}=f_{4}\left(t_{1},t_{2},t_{3}\right)\right\} 
\]
such that the generator $T_{0}$of the $\mathbb{Z}_{4}$-action
\[
\begin{array}{c}
\left(\log a,\log b\right)\\
\left(\log b,-\log a\right)\\
\left(-\log a,-\log b\right)\\
\left(-\log b,\log a\right),
\end{array}
\]
that is
\[
\begin{array}{c}
\left(t_{0};\left[t_{1},t_{2},t_{3}\right]\right)\mapsto\\
\left(\det\left(\begin{array}{ccc}
0 & -1 & 0\\
1 & 0 & 0\\
0 & 0 & -1
\end{array}\right)\cdot t_{0};\left[\left(\begin{array}{ccc}
t_{1} & t_{2} & t_{3}\end{array}\right)\left(\begin{array}{ccc}
0 & -1 & 0\\
1 & 0 & 0\\
0 & 0 & -1
\end{array}\right)\right]\right)
\end{array}
\]
is an automorphism of $B_{2}$ with characteristic polynomial 
\[
\frac{\lambda^{4}-1}{\lambda-1}=\lambda^{3}+\lambda^{2}+\lambda+1.
\]
In fact we can demand that $\mathrm{Aut}B_{2}$ is the entire symmetry
group of CUBE, namely 
\[
S_{4}\times\mathbb{Z}_{2}=signed\,S_{3},
\]
where 
\[
B_{2}=\left\{ t_{0}^{2}=t_{1}^{4}+t_{2}^{4}+t_{3}^{4}+\alpha\left(t_{1}^{2}t_{2}^{2}+t_{1}^{2}t_{3}^{2}+t_{2}^{2}t_{3}^{2}\right)\right\} .
\]

By elementary algebra bitangent lines to $R$ that we will denote
as $\left\{ m_{1}:=c_{\alpha}t_{1}+t_{2}=0\right\} $ and $\left\{ m_{2}:=t_{1}-c_{\alpha}t_{2}=0\right\} $
respectively. Then
\[
\begin{array}{c}
m_{1}\circ T_{0}=m_{2}\\
m_{2}\circ T_{0}=-m_{1}.
\end{array}
\]
with eigenvectors
\[
\begin{array}{c}
m_{+i}:=m_{1}-i\text{·}m_{2}\\
m_{-i}:=m_{1}+i\text{·}m_{2}
\end{array}
\]
indexed by eigenvalues for the action of $T_{0}$ on $H^{0}\left(K_{B_{2}}^{-1}\right)$.
The third eigenvalue must be $-1$. Let $n_{-1}$ denote the associated
eigenvector. In what follows and associated paper, it will be convenient
to appeal to two sets of coordinates for $\mathbb{P}\left(H^{0}\left(K_{B_{2}}^{-1}\right)\right)$,
one being the coordinates $\mathbb{P}_{\left[n_{-1},m_{+i},m_{-i}\right]}$
given by eigenvectors and the other being $\mathbb{P}_{\left[n_{0},m_{1},m_{2}\right]}$
incorporating the defining forms for the bitangent lines defined above.
To avoid notational confusion 
\[
n_{0}=n_{-1}
\]
defines the same line in $\mathbb{P}\left(H^{0}\left(K_{B_{2}}^{-1}\right)\right)$
but the different index indicates which coordinate system we are referring
to. Then

\begin{equation}
\begin{array}{c}
T_{0}^{\ast}\left(n_{0}\right)=n_{0}\circ T_{0}=-n_{0}\\
T_{0}^{\ast}\left(m_{1}\right)=m_{1}\circ T_{0}=m_{2}\\
T_{0}^{\ast}\left(m_{2}\right)=m_{2}\circ T_{0}=-m_{1}
\end{array}\label{eq:t0}
\end{equation}
and
\begin{equation}
\begin{array}{c}
T_{0}^{\ast}\left(n_{-1}\right)=n_{-1}\circ T_{0}=-n_{0}\\
T_{0}^{\ast}\left(m_{+i}\right)=m_{+i}\circ T_{0}=i\text{·}m_{+i}\\
T_{0}^{\ast}\left(m_{-i}\right)=m_{-i}\circ T_{0}=-i\text{·}m_{1}.
\end{array}\label{eq:t1}
\end{equation}
 With respect to the direct-sum decomposition 
\begin{equation}
H^{0}\left(K_{B_{2}}^{-1}\right)=H^{0}\left(K_{B_{2}}^{-1}\right)^{\left[+1\right]}\oplus H^{0}\left(K_{B_{2}}^{-1}\right)^{\left[-1\right]}\label{eq:pushsplit}
\end{equation}
induced by the involution $T_{B_{2}}^{2}$ the first summand is generated
by a single section $n_{0}$ while the second summand is the two-dimensional
$T_{0}^{2}$-eigenspace with eigenvalue $-1$. The second summand
is spanned by any two of the four vectors
\[
m_{1},m_{2},m_{+i},m_{-i}.
\]

For $B_{3}=B_{2}\times\mathbb{P}_{\left[u_{0},v_{0}\right]}$, the
restriction map
\begin{equation}
H^{0}\left(\mathcal{O}_{\mathbb{P}_{\left[n_{0},m_{1},m_{2}\right]}}\left(1\right)\boxtimes\mathcal{O}_{\mathbb{P}_{\left[u_{0},v_{0}\right]}}\left(2\right)\right)\rightarrow H^{0}\left(K_{B_{3}}^{-1}\right)\label{eq:prod}
\end{equation}
is an isomorphism. Therefore we can write
\[
T_{u,v}\left(\frac{u_{0}-v_{0}}{u_{0}+v_{0}},\left(\left[n_{-1},m_{+i},m_{-i}\right]\right)\right)=\left(i\text{·}\left(\frac{u_{0}-v_{0}}{u_{0}+v_{0}}\right),T_{0}\left(\left[n_{-1},m_{+i},m_{-i}\right]\right)\right)
\]
on $H^{0}\left(\mathcal{O}_{\mathbb{P}_{\left[n_{-1},m_{+i},m_{-i}\right]}}\left(1\right)\boxtimes\mathcal{O}_{\mathbb{P}_{\left[u_{0},v_{0}\right]}}\left(2\right)\right)$
or equivalently 
\[
T_{u,v}\left(\frac{u_{0}-v_{0}}{u_{0}+v_{0}},\left(\left[n_{0},m_{1},m_{2}\right]\right)\right)=\left(i\text{·}\left(\frac{u_{0}-v_{0}}{u_{0}+v_{0}}\right),T_{0}\left(\left[n_{0},m_{1},m_{2}\right]\right)\right)
\]
on $H^{0}\left(\mathcal{O}_{\mathbb{P}_{\left[n_{0},m_{1},m_{2}\right]}}\left(1\right)\boxtimes\mathcal{O}_{\mathbb{P}_{\left[u_{0},v_{0}\right]}}\left(2\right)\right)$.
We sum up as follows.
\begin{lem}
\label{lem:i)-The-sub-linear32} i) $\left|K_{\mathbb{P}_{\left[u_{0},v_{0}\right]}\times B_{2}}^{-1}\right|$
is basepoint-free. Also 
\[
h^{0}\left(K_{\mathbb{P}_{\left[u_{0},v_{0}\right]}\times B_{2}}^{-1}\right)=9
\]
and 
\[
h^{k}\left(K_{\mathbb{P}_{\left[u_{0},v_{0}\right]}\times B_{2}}^{-1}\right)=0
\]
for $k>0$.

ii) The quotient 
\[
\left(\mathbb{P}_{\left[u_{0},v_{0}\right]}\times B_{2}\right)^{\vee}=\frac{\mathbb{P}_{\left[u_{0},v_{0}\right]}\times B_{2}}{\left\{ C_{u,v}\right\} }
\]
carries a faithful $\mathbb{Z}_{2}$-action $T_{u,v}$.

iii) Under the branched double cover 
\[
B_{2}\rightarrow\mathbb{P}_{\left[n_{0},m_{1},m_{2}\right]}
\]
the symmetry group $S_{3}$ and associated six exceptional curves
of $D_{6}$ listed in the Table at the beginning of this Section specialize
to a subgoup $S_{3}\leq\mathrm{Aut}\left(B_{2}\right)$ and bitangents
of the branch curve $R$. In particular, $\left\{ m_{1}=0\right\} $
and $\left\{ m_{2}=0\right\} $ are bitangents to the branch locus
$R$ and satisfy
\[
\begin{array}{c}
C_{u,v}^{\ast}\left(m_{1}\right)=-m_{1}\\
C_{u,v}^{\ast}\left(m_{2}\right)=-m_{2}.
\end{array}
\]
\end{lem}

\begin{proof}
i) 
\[
K_{B_{3}}^{-1}=K_{B_{2}}^{-1}\boxtimes K_{\mathbb{P}_{\left[u_{0},v_{0}\right]}}^{-1}.
\]
Now use the Künneth formula and the Kodaira Vanishing Theorem.

ii) The $\mathbb{Z}_{4}$-action on $B_{3}$ is the action $T_{u,v}$
as constructed above. Since $C_{u,v}$ and $T_{B_{2}}$ commute and
$T_{u,v}^{2}=C_{u,v}$, the action descends to a faithful $\mathbb{Z}_{2}$-action
on the $C_{u,v}$-quotient $B_{3}^{\vee}$.

iii) Classical fact deriving from the fact that $B_{2}$ is the projection
of a cubic surface in $\mathbb{P}^{3}$ from one of its points. Under
this projection plane sections spanned by one of the $27$ lines and
the center of projection map to bitangent lines to $R\subseteq\mathbb{P}_{\left[n_{0},m_{1},m_{2}\right]}$
as does the plane section tangent to the cubic at the center of projection.
So the claim follows from the corresponding assertion for a cubic
surface with $S_{3}$ symmetry.
\end{proof}
We have the direct-sum decomposition 
\[
\left(\pi_{B_{3}^{\vee}}\right)_{\ast}K_{B_{2}\times\mathbb{P}_{\left[u_{0},v_{0}\right]}}^{-1}=K_{B_{3}^{\vee}}^{-1}\oplus\left(K_{B_{3}^{\vee}}^{-1}\otimes\mathcal{O}_{B_{3}^{\vee}}\left(\varepsilon_{e,v}\right)\right)
\]
and, referring to (\ref{eq:tee}) and (\ref{eq:cee}), we have the
following tables of eigenvectors and values for actions on anti-canonical
forms on $B_{3,\delta}$:

\smallskip{}
 
\noindent \begin{center}
\begin{tabular}{|c|c|c|}
\hline 
Table 1:  & $T_{u,v}$  & $C_{u,v}$\tabularnewline
\hline 
\hline 
$h^{0}\left(K_{B_{3,0}^{\vee}}^{-1}\right)=4$  &  & $\begin{array}{c}
C_{u,v}\left(w\right)=w\\
C_{u,v}\left(x\right)=x\\
C_{u,v}\left(y\right)=-y\\
C_{u,v}\left(z\right)=-z
\end{array}$\tabularnewline
\hline 
$\left(u_{0}+v_{0}\right)^{2}\text{·}n_{-1}$  & $-1$  & $+1$\tabularnewline
\hline 
$\left(u_{0}-v_{0}\right)^{2}\text{·}n_{-1}$  & $+1$  & $+1$\tabularnewline
\hline 
$\left(u_{0}^{2}-v_{0}^{2}\right)\text{·}m_{-i}$  & $-1$  & $+1$\tabularnewline
\hline 
$\left(u_{0}^{2}-v_{0}^{2}\right)\text{·}m_{+i}$  & $+1$  & $+1$\tabularnewline
\hline 
\end{tabular}\smallskip{}
 
\par\end{center}

\noindent \begin{center}
\begin{tabular}{|c|c|c|}
\hline 
Table 2:  & $T_{u,v}$  & $C_{u,v}$\tabularnewline
\hline 
\hline 
$h^{0}\left(K_{B_{3,0}^{\vee}}^{-1}\otimes\mathcal{O}_{B_{3}^{\vee}}\left(\varepsilon_{u,v}\right)\right)=5$  &  & $\begin{array}{c}
C_{u,v}\left(w\right)=w\\
C_{u,v}\left(x\right)=x\\
C_{u,v}\left(y\right)=-y\\
C_{u,v}\left(z\right)=-z
\end{array}$\tabularnewline
\hline 
$\left(u_{0}+v_{0}\right)^{2}\text{·}m_{-i}$  & $-i$  & $-1$\tabularnewline
\hline 
$\left(u_{0}+v_{0}\right)^{2}\text{·}m_{+i}$  & $+i$  & $-1$\tabularnewline
\hline 
$\left(u_{0}-v_{0}\right)^{2}\text{·}m_{-i}$  & $+i$  & $-1$\tabularnewline
\hline 
$\left(u_{0}-v_{0}\right)^{2}\text{·}m_{+i}$  & $-i$  & $-1$\tabularnewline
\hline 
$\left(u_{0}^{2}-v_{0}^{2}\right)\text{·}n_{-1}=:z_{0}$  & $-i$  & $-1$\tabularnewline
\hline 
\end{tabular}\smallskip{}
 
\par\end{center}
\begin{lem}
The anti-canonical linear system $\left|H^{0}\left(K_{B_{3}^{\vee}}^{-1}\right)\right|$
is numerically effective (nef) and big. It has two basepoints
\[
\left(\left\{ \left[1,\pm1\right]\right\} \times\left\{ 1,0,0\right\} \right)\in\mathbb{P}_{\left[\left(u_{0}^{2}+v_{0}^{2}\right),u_{0}v_{0}\right]}\times\mathbb{P}_{\left[n_{0},m_{1},m_{2}\right]}.
\]
Therefore $S_{\mathrm{GUT}}^{\vee}$ is a 'hyperelliptic' Enriques
surface. The rational mapping
\[
\varphi:S_{\mathrm{GUT}}^{\vee}\dashrightarrow\mathbb{P}^{3}
\]
induced by the linear system is resolved be a simple blow-up of the
two basepoints yielding a $2-1$ morphism
\[
\tilde{\varphi}:\widetilde{S_{\mathrm{GUT}}^{\vee}}\rightarrow\mathbb{P}^{3}
\]
onto a smooth quadric surface $\mathbb{P}_{\left[\left(u_{0}^{2}+v_{0}^{2}\right),u_{0}v_{0}\right]}\times\mathbb{P}_{\left[m_{1},m_{2}\right]}$
where the image the two exceptional curves is $\left\{ \left[1,\pm1\right]\right\} \times\mathbb{P}_{\left[m_{1},m_{2}\right]}$.
\end{lem}

\begin{proof}
By direct computation with Table 2 just above, $S_{\mathrm{GUT}}^{\vee}$
is identified as being in case 2c for $n=3$ in Theorem 3.2.2 of \cite{Dolgachev-1}.
Key to this classification is the fact that in Table 2 all five sections
vanish at $\left(\left\{ \left[\pm1,1\right]\right\} ,\left\{ 1,0,0\right\} \right)$.
\end{proof}

\section{Requirements for the Tate form\label{subsec:Requirements-for-the}}

For the purposes of obtaining phenomenologically consistent numerical
data,\footnote{``...phenomenologically consistent numerical data'' refers to the
necessity of having three $\mathbf{10}$ and no $\mathbf{\bar{10}}$-representations
over the matter curve $\Sigma_{\mathbf{10}}^{\left(4\right)}$, three
$\mathbf{\bar{5}}$ and no $\mathbf{5}$-representations over the
matter curve $\Sigma_{\mathbf{\bar{5}}}^{\left(41\right)}$, as well
as having exactly one Higgs doublet over the Higgs curve $\Sigma_{\mathbf{\bar{5}}}^{\left(44\right)}$.} we utilize the $B_{3}^{\wedge}$-divisors $E_{\cdot}$ in (\ref{eq:divisors})
and their corresponding rays $e_{\cdot}$ in the lattice $N^{\wedge}$.
We let the ray $f_{\cdot}$ denote the same ray in the lattice $N$
and denote the corresponding divisor as $F_{\cdot}$. Recalling that
the fan used in the construction of $B_{3}$ contains the ray generated
by $f_{\dot{x}\dot{z}\ddot{y}\ddot{w}}$ as well as its negative $-f_{\dot{x}\dot{z}\ddot{y}\ddot{w}}$
corresponding to the divisors $\left\{ u_{0}=0\right\} $ and $\left\{ v_{0}=0\right\} $
respectively.

The Tate form of our eventual $F$-theory model $W_{4}/B_{3}$ is
written as 
\[
y^{2}w=x^{3}+a_{5}xyw+a_{4}zx^{2}w+a_{3}z^{2}yw^{2}+a_{2}z^{3}xw^{2}+a_{0}z^{5}w^{3}.
\]
The spectral divisor is given by the equation
\[
0=a_{5}t^{5}+a_{4}t^{4}z+a_{3}t^{3}z{}^{2}+a_{2}t^{2}z^{3}+a_{0}z^{5}
\]
where $t=\frac{y}{x}$. By Table 2 above, $\dim H^{0}\left(K_{B_{3}}^{-1}\right)=5$
so the spaces
\[
\mathbb{C}\text{·}u_{0}v_{0}m_{1}+\mathbb{C}\text{·}\left(v_{0}^{2}+u_{0}^{2}\right)m_{2}+\mathbb{C}\text{·}z
\]
and
\[
\mathbb{C\text{·}}a_{4}+\mathbb{C\text{·}}a_{3}+\mathbb{C\text{·}}a_{2}
\]
have a one-dimensional intersection generated by
\[
\begin{array}{c}
\lambda_{1}\text{·}u_{0}v_{0}m_{1}+\lambda_{2}\text{·}\left(v_{0}^{2}+u_{0}^{2}\right)m_{2}+z\\
=\kappa_{4}\mathbb{\text{·}}a_{4}+\kappa_{3}\mathbb{\text{·}}a_{3}+\kappa_{2}\mathbb{\text{·}}a_{2}.
\end{array}
\]

\begin{lem}
In order to have the relation
\[
z=\kappa_{5}\mathbb{\text{·}}a_{5}+\kappa_{4}\mathbb{\text{·}}a_{4}+\kappa_{3}\mathbb{\text{·}}a_{3}+\kappa_{2}\mathbb{\text{·}}a_{2}
\]
required in §4 of \cite{Clemens-1}, define 
\begin{equation}
a_{5}=-\left(\lambda_{1}\text{·}u_{0}v_{0}m_{1}+\lambda_{2}\text{·}\left(v_{0}^{2}+u_{0}^{2}\right)m_{2}\right)\label{eq:a5}
\end{equation}
where $\left\{ m_{j}=0\right\} _{j=1,2}$ define the distinguished
two-orbit of $T_{0}$ acting on the $28$ bitangents to the branch
locus of $B_{2}/\mathbb{P}_{\left[n_{0},m_{1},m_{2}\right]}$.\footnote{This assumption is critical so that our model satisfies the condition
of three-generation} This pair of $C_{u,v}$- invariant bitangents $\left\{ m_{j}=0\right\} _{j=1,2}$
then lifts into the image of 
\[
F_{\dot{x}\dot{z}\ddot{y}\ddot{w}}=\left\{ \frac{e_{\dot{x}\dot{z}\ddot{y}\ddot{w}}}{2}=0\right\} \Rightarrow\left\{ u_{0}=0\right\} 
\]
and into
\[
F_{\dot{y}\dot{w}\ddot{x}\ddot{z}}=\left\{ \frac{e_{\dot{y}\dot{w}\ddot{x}\ddot{z}}}{2}=0\right\} \Rightarrow\left\{ v_{0}=0\right\} .
\]
Also 
\[
\left(v_{0}^{2}+u_{0}^{2}\right)m_{j},\,u_{0}v_{0}m_{j}\in H^{0}\left(K_{B_{3}}^{-1}\right)^{\left[-1\right]}.
\]
Therefore, for all allowable choices of $z\subseteq H^{0}\left(K_{B_{3}}^{-1}\right)^{\left[-1\right]}$
there are rational curves 
\begin{equation}
\begin{array}{c}
F_{+}\subseteq\left\{ z=0\right\} \cap F_{\dot{x}\dot{z}\ddot{y}\ddot{w}}\\
F_{-}\subseteq\left\{ z=0\right\} \cap F_{\dot{y}\dot{w}\ddot{x}\ddot{z}}
\end{array}\label{eq:spchoice1}
\end{equation}
lying in $S_{\mathrm{GUT}}$ with intersection matrix 
\[
\begin{array}{ccc}
\text{·} & F_{+} & F_{-}\\
F_{+} & -2 & 0\\
F_{-} & 0 & -2.
\end{array}
\]
$C_{u,v}$ interchanges $F_{+}$ with $F_{-}$. We choose the remaining
$a_{j}$ and $z$ and $t=\frac{y}{x}$ in the space generated by the
forms in Table 2. 

Additionally we choose $t\in H\left(K_{B_{3}}^{-1}\right)^{\left[-1\right]}$
so that it vanishes on $\left\{ z=u_{0}v_{0}=m_{2}=0\right\} $. This
choice guarantees that the surfaces 
\begin{equation}
\left(S_{\mathrm{GUT}}\cap\left(\left\{ u_{0}v_{0}=m_{2}=0\right\} \right)\right)\times\mathbb{P}_{\left[t,z\right]}\label{eq:noninherited}
\end{equation}
lie in the spectral divisor \ref{eq:specdiv-1}), in fact in the component
$\mathcal{D}^{\left(4\right)}$ of 
\begin{equation}
\mathcal{D}:=\left\{ a_{5}t^{5}+a_{4}t^{4}z+a_{3}t^{3}z^{2}+a_{2}t^{2}z^{3}+a_{0}z^{5}=0\right\} \subseteq\mathbb{P}_{\left[t,z\right]}\times B_{3}.\label{eq:spec2}
\end{equation}
\end{lem}

\begin{proof}
We need only prove the last statement. We have chosen 
\begin{equation}
t=l\left(n_{0},m_{1},m_{2}\right)\text{·}u_{0}v_{0}+q\left(u_{0},v_{0}\right)\text{·}m_{2}\label{eq:need2}
\end{equation}
namely as a generic section containing $\left\{ u_{0}v_{0}=m_{2}=0\right\} \subseteq\mathbb{P}_{\left[u_{0},v_{0}\right]}\times\mathbb{P}_{\left[n_{0},m_{1},m_{2}\right]}$
and 
\[
z=\lambda_{1}m_{1}\text{·}u_{0}v_{0}+\lambda_{2}\text{·}\left(v_{0}^{2}+u_{0}^{2}\right)m_{2}+\sum_{j=2}^{4}\kappa_{j}\text{·}a_{j}
\]
in \eqref{eq:a5}. Since $\left\{ u_{0}v_{0}=m_{2}=0\right\} \subseteq\left\{ z=0\right\} $,
$\sum_{j=2}^{4}\kappa_{j}\text{·}a_{j}$ must also vanish there. That
is, we can write
\begin{equation}
z=l'\left(n_{0},m_{1},m_{2}\right)\text{·}u_{0}v_{0}\text{·}+q'\left(u_{0},v_{0}\right)\text{·}m_{2}.\label{eq:need1}
\end{equation}
Since all sections of $H^{0}\left(\mathcal{O}_{B_{3}}\left(N\right)\right)$
are pull-backs of sections of $H^{0}\left(\mathcal{O}_{\mathbb{P}_{\left[u_{0},v_{0}\right]}\times\mathbb{P}_{\left[n_{0},m_{1},m_{2}\right]}}\left(N\right)\right)$
under the branched double cover
\[
B_{3}=\mathbb{P}_{\left[u_{0},v_{0}\right]}\times B_{2}\rightarrow\mathbb{P}_{\left[u_{0},v_{0}\right]}\times\mathbb{P}_{\left[n_{0},m_{1},m_{2}\right]},
\]
we first consider \eqref{eq:spec2} as an equation over $\mathbb{P}_{\left[u_{0},v_{0}\right]}\times\mathbb{P}_{\left[n_{0},m_{1},m_{2}\right]}$
. We consider the blow-up of \eqref{eq:spec2} in
\[
\mathbb{P}_{\left[u_{0},v_{0}\right]}\times\mathbb{P}_{\left[n_{0},m_{1},m_{2}\right]}\times\mathbb{P}_{\left[T,Z\right]}
\]
defined by 
\[
\begin{array}{c}
t=T\text{·}\xi\\
z=Z\text{·}\xi
\end{array}
\]
so that from \eqref{eq:spec2} 
\[
\left|\begin{array}{cc}
a_{4}t^{2}z^{3}+a_{2}t^{2}z^{2}+a_{0}z^{5} & 1\\
-\left(a_{5}t^{5}+a_{3}t^{3}z^{2}\right) & 1
\end{array}\right|=\xi^{5}\text{·}\left|\begin{array}{cc}
a_{4}T^{2}Z^{3}+a_{2}T^{2}Z^{2}+a_{0}Z^{5} & 1\\
-\left(a_{5}T^{5}+a_{3}T^{3}Z^{2}\right) & 1
\end{array}\right|.
\]
By \eqref{eq:need2} and \eqref{eq:need1}
\[
\left|\begin{array}{cc}
q & -l\\
u_{0}v_{0} & m_{2}
\end{array}\right|=\left|\begin{array}{cc}
q' & -l'\\
u_{0}v_{0} & m_{2}
\end{array}\right|=0
\]
and the fact that rank of the matrix
\[
\left(\begin{array}{cc}
q & -l\\
q' & -l'
\end{array}\right)
\]
is everywhere of rank at least one, the proper transform 
\begin{equation}
\begin{array}{c}
\mathcal{D}:=\left\{ \left|\begin{array}{cc}
a_{4}T^{2}Z^{3}+a_{2}T^{2}Z^{2}+a_{0}Z^{5} & 1\\
-\left(a_{5}T^{5}+a_{3}T^{3}Z^{2}\right) & 1
\end{array}\right|=0\right\} \\
\subseteq\mathbb{P}_{\left[u_{0},v_{0}\right]}\times\mathbb{P}_{\left[n_{0},m_{1},m_{2}\right]}\times\mathbb{P}_{\left[T,Z\right]}.
\end{array}\label{eq:preblow}
\end{equation}
contains the surface 
\[
\begin{array}{c}
\left\{ \xi=0\right\} \cap\mathcal{D}=\left(\left\{ u_{0}v_{0}=m_{2}=0\right\} \times\mathbb{P}_{\left[T,Z\right]}\right)\\
\subseteq\left(\mathbb{P}_{\left[u_{0},v_{0}\right]}\times\mathbb{P}_{\left[n_{0},m_{1},m_{2}\right]}\right)\times\mathbb{P}_{\left[T,Z\right]}
\end{array}
\]
projecting to
\[
\left\{ u_{0}v_{0}=m_{2}=0\right\} \subseteq\left\{ z=0\right\} \subseteq\left(\mathbb{P}_{\left[u_{0},v_{0}\right]}\times\mathbb{P}_{\left[n_{0},m_{1},m_{2}\right]}\right).
\]
With respect to the branched double cover
\[
\mathbb{P}_{\left[u_{0},v_{0}\right]}\times B_{2}\rightarrow\left(\mathbb{P}_{\left[u_{0},v_{0}\right]}\times\mathbb{P}_{\left[n_{0},m_{1},m_{2}\right]}\right)
\]
we have a cartesian diagram
\[
\begin{array}{ccc}
\mathcal{D} & \rightarrow & B_{3}\times\mathbb{P}_{\left[T,Z\right]}\\
\downarrow &  & \downarrow\\
\mathcal{D}' & \rightarrow & \left(\mathbb{P}_{\left[u_{0},v_{0}\right]}\times\mathbb{P}_{\left[n_{0},m_{1},m_{2}\right]}\right)\times\mathbb{P}_{\left[T,Z\right]}.
\end{array}
\]
The pull-back of the the exceptional set $\left\{ \xi=0\right\} \cap\mathcal{D}'\subseteq\left(\mathbb{P}_{\left[u_{0},v_{0}\right]}\times\mathbb{P}_{\left[n_{0},m_{1},m_{2}\right]}\right)\times\mathbb{P}_{\left[T,Z\right]}$
to $B_{3}\times\mathbb{P}_{\left[T,Z\right]}$ is the reducible surface
\begin{equation}
\left(F_{\pm}\cup F_{\pm}^{opp}\right)\times\mathbb{P}_{\left[T,Z\right]}\subseteq B_{3}\times\mathbb{P}_{\left[T,Z\right]}\label{eq:surf}
\end{equation}
where $F_{\pm}\cup F_{\pm}^{opp}$ denote the components of $\left\{ m_{2}=0\right\} $
in $\left\{ u_{0}=0\right\} \times B_{2}$ and $\left\{ v_{0}=0\right\} \times B_{2}$
respectively. So the spectral divisor 
\begin{equation}
\begin{array}{c}
\mathcal{D}:=\left\{ a_{5}T^{5}+a_{4}T^{4}Z+a_{3}T^{3}Z^{2}+a_{2}T^{2}Z^{3}+a_{0}Z^{5}=0\right\} \\
\subseteq B_{3}\times\mathbb{P}_{\left[T,Z\right]}
\end{array}\label{eq:sosp}
\end{equation}
contains the four-component divisor
\[
\left(F_{+}\times\mathbb{P}_{\left[T,Z\right]}\right)+\left(F_{-}\times\mathbb{P}_{\left[T,Z\right]}\right)+\left(F_{+}^{opp}\times\mathbb{P}_{\left[T,Z\right]}\right)+\left(F_{-}^{opp}\times\mathbb{P}_{\left[T,Z\right]}\right)
\]
entirely supported in $S_{\mathrm{GUT}}\times\mathbb{P}_{\left[T,Z\right]}$
. Therefore the divisor 
\begin{equation}
\left(F_{+}\times\mathbb{P}_{\left[T,Z\right]}\right)-\left(F_{-}\times\mathbb{P}_{\left[T,Z\right]}\right)\label{eq:linebun}
\end{equation}
on $\mathcal{D}$ pushes forward to the trivial divisor on $B_{3}$
but restricts to a non-trivial divisor on the threefold $S_{\mathrm{GUT}}\times\mathbb{P}_{\left[T,Z\right]}$.
Furthermore the difference $\left(F_{+}-F_{-}\right)$ defines a non-trivial
line bundle on $S_{\mathrm{GUT}}$ that has degree zero with respect
to the polarization $N$. Additionally $\beta_{3}=C_{u,s}$ exchanges
$F_{+}$ with $F_{-}$. These numerics will allow us to define a Higgs
line bundle with the correct numerical invariants. 
\end{proof}
\begin{flushleft}
To prevent the existence of vector-like exotics in our $F$-theory
model, we have required 
\par\end{flushleft}

\begin{flushleft}
\begin{equation}
a_{5}+a_{4}+a_{3}+a_{2}+a_{0}=0\label{eq:sym}
\end{equation}
so that the section of $P/B_{3}$ given by 
\begin{equation}
\begin{array}{c}
x=z^{2}w\\
y=z^{3}w
\end{array}\label{eq:sendsec}
\end{equation}
lies in $W_{4}.$ Furthermore this additional assumption forces the
spectral divisor $\mathcal{D}$ given by (\ref{eq:spec2}) to become
reducible, with one component given by $t=z$, and the other component
of degree $4$. More precisely 
\[
\begin{array}{c}
a_{5}t^{5}+a_{4}t^{4}z+a_{3}t^{3}z^{2}+a_{2}t^{2}z^{3}+a_{0}z^{5}=\\
\left(t-z\right)\left(a_{5}t^{4}+a_{54}t^{3}z-a_{20}t^{2}z^{2}-a_{0}z^{3}t-a_{0}z^{4}\right)
\end{array}
\]
where $a_{54}=a_{5}+a_{4}$ and $a_{20}=a_{2}+a_{0}$, etc. 
\par\end{flushleft}

Referring to \eqref{eq:iota} and \cite{Clemens-3}, the Higgs curve
$\Sigma_{\mathbf{\bar{5}}}^{\left(44\right)}$ is derived from the
surface in $\mathcal{D}^{\left(4\right)}$ defined by common solutions
to the $C_{u,v}$-equivariant system of equations 
\begin{equation}
\begin{array}{c}
a_{5}t^{4}-a_{20}t^{2}z^{2}-a_{0}z^{4}=0\\
a_{54}t^{2}-a_{0}z^{2}=0.
\end{array}\label{eq:Higgspush}
\end{equation}
It doubly covers the surface in $B_{3}$ defined by the resultant
equation that, using $a_{54320}=0$, reduces to 
\begin{equation}
\left|\begin{array}{cc}
a_{4} & -a_{5}\\
a_{3}+a_{0} & a_{3}
\end{array}\right|=0\label{eq:receive}
\end{equation}
with branch locus defined by the restriction of the divisor class
$N$. 

The matter curve $\Sigma_{\mathbf{\bar{5}}}^{\left(41\right)}$ is
given by the common solutions to 
\[
a_{420}=z=0.
\]
The other matter curve $\Sigma_{\mathbf{10}}^{\left(4\right)}$ is
given by 
\[
a_{5}=z=0.
\]

\subsection{Numerology of divisors on $S_{\mathrm{GUT}}$}

Furthermore (\ref{eq:justright}) implies that the genus of $Z:=N\text{·}S_{\mathrm{GUT}}$
is $7$ and 
\begin{equation}
\left\{ u_{0}=0\right\} \text{·}N^{2}=\left\{ v_{0}=0\right\} \text{·}N^{2}=2\label{eq:alsoimp}
\end{equation}
where, as above, we denote $K_{B_{3}}^{-1}=\mathcal{O}_{B_{3}}\left(N\right)$. 
\begin{prop}
\label{prop:The-line-bundle}The line bundle $\mathcal{O}_{B_{3}}\left(N\right)$
is ample and the line bundle $\mathcal{O}_{B_{3}}\left(2N\right)$
is very ample. 
\end{prop}

\begin{proof}
The line bundle $\mathcal{O}_{B_{3}}\left(N\right)$ is the pull-back
of the very ample line bundle $\mathcal{O}_{\mathbb{P}_{\left[u_{0},v_{0}\right]}}\left(2\right)\boxtimes\mathcal{O}_{\mathbb{P}_{\left[n_{0},m_{1},m_{2}\right]}}\left(1\right)$
on $\mathbb{P}_{\left[u_{0},v_{0}\right]}\times\mathbb{P}_{\left[n_{0},m_{1},m_{2}\right]}$.
By Tables 1\&2 above, the pull-back map 
\[
H^{0}\left(\mathcal{O}_{\mathbb{P}_{\left[u_{0},v_{0}\right]}\times\mathbb{P}_{\left[n_{0},m_{1},m_{2}\right]}}\left(N\right)\right)\rightarrow H^{0}\left(\mathcal{O}_{B_{3}}\left(N\right)\right)
\]
is an isomorphism. However the injective pull-back map 
\[
H^{0}\left(\mathcal{O}_{\mathbb{P}_{\left[u_{0},v_{0}\right]}\times\mathbb{P}_{\left[n_{0},m_{1},m_{2}\right]}}\left(2N\right)\right)\rightarrow H^{0}\left(\mathcal{O}_{B_{3}}\left(2N\right)\right)
\]
has a one-dimensional cokernel generated by the quartic ramification
locus of the branched double cover $B_{2}/\mathbb{P}_{\left[n_{0},m_{1},m_{2}\right]}$
and therefore contains sections that separate the two sheets. 
\end{proof}
To understand the use of these surfaces, one must consider the image
$\mathcal{O}_{B_{3}}\left(2N\right)$ is very ample. 
\[
\mathcal{C}_{Higgs}=\mathcal{C}_{Higgs}^{\left(4\right)}\cup\tilde{\tau}\left(B_{3}\right)\subseteq\tilde{W}_{4}
\]
of the spectral divisor $\mathcal{D}$ in the canonical crepant resolution
$\tilde{W}_{4}$ constructed above. $S_{\mathrm{GUT}}\subseteq B_{3}$
has canonical lifting 
\[
\tilde{S}_{\mathrm{GUT}}\subseteq\mathcal{C}_{Higgs}\subseteq\tilde{W}_{4}
\]
by means of which the Higgs line bundle is pushed forward to a line
bundle on the divisor $\mathcal{C}_{Higgs}$.

The push-forward of multiples of this line bundle on $S_{\mathrm{GUT}}\subseteq B_{3}$
will be denoted as 
\[
\mathcal{O}_{S_{\mathrm{GUT}}}\left(m\left(F_{+}-F_{-}\right)\right).
\]
The push-forward to $S_{\mathrm{GUT}}\subseteq B_{3}$ of the restriction
of the Higgs line bundle $\mathcal{L}_{Higgs}$ to $\tilde{S}_{\mathrm{GUT}}$
will then become 
\[
\mathcal{O}_{S_{\mathrm{GUT}}}\left(N+m\left(F_{+}-F_{-}\right)\right)
\]
for appropriate $m$ where $N$ is the anti-canonical divisor of $B_{3}$
(again restricted to $S_{\mathrm{GUT}}$). This twisting of the restriction
of $K_{B_{3}}^{-1}$ by this degree-zero line bundle is exactly the
modification that removes first cohomology of $\mathcal{L}_{Higgs}$
on the matter curves and, by a classical theorem in the theory of
Prym varieties, drops the number of Higgs doublets to one.

Also in our eventual $F$-theory model, $S_{\mathrm{GUT}}=\left\{ z=0\right\} $
will be linearly equivalent to $N$, matter curves will have divisor
class 
\[
Z:=S_{\mathrm{GUT}}\text{·}N
\]
and the Higgs curve $Z_{2}$ on $S_{\mathrm{GUT}}$ will have class
$2Z$.

Since $B_{2}$ is a del Pezzo surface, its anti-canonical bundle is
ample as is the anti-canonical bundle 
\[
K_{B_{2}}^{-1}\boxtimes K_{\left[u_{0},v_{0}\right]}^{-1}
\]
allowing us to apply the Kodaira Vanishing Theorem repeated in what
follows. The cohomology sequence for the short exact sequence 
\[
0\rightarrow\mathcal{O}_{B_{3}}\rightarrow\mathcal{O}_{B_{3}}\left(N\right)\rightarrow\mathcal{O}_{S_{\mathrm{GUT}}}\left(Z\right)\rightarrow0
\]
shows that the forms in the left-hand column of Tables 1 and 2 above
together form a basis for $H^{0}\left(\mathcal{O}_{B_{3}}\left(N\right)\right)$,
that 
\[
\frac{H^{0}\left(\mathcal{O}_{B_{3}}\left(N\right)\right)}{\mathbb{C}\text{·}z}\cong H^{0}\left(\mathcal{O}_{S_{\mathrm{GUT}}}\left(Z\right)\right),
\]
that $h^{1}\left(\mathcal{O}_{B_{3}}\left(N\right)\right)=h^{1}\left(\mathcal{O}_{S_{\mathrm{GUT}}}\left(Z\right)\right)$,
and that $h^{2}\left(\mathcal{O}_{B_{3}}\left(N\right)\right)=h^{2}\left(\mathcal{O}_{S_{\mathrm{GUT}}}\left(Z\right)\right)=0$.
The cohomology sequence for the short exact sequence 
\[
0\rightarrow\mathcal{O}_{S_{\mathrm{GUT}}}\left(-Z\right)\rightarrow\mathcal{O}_{S_{\mathrm{GUT}}}\rightarrow\mathcal{O}_{Z}\rightarrow0
\]
shows that $h^{1}\left(\mathcal{O}_{S_{\mathrm{GUT}}}\left(-Z\right)\right)=h^{1}\left(\mathcal{O}_{S_{\mathrm{GUT}}}\left(Z\right)\right)=0$
so that $h^{1}\left(\mathcal{O}_{B_{3}}\left(N\right)\right)=0$ as
well.

For any smooth curve $Z_{n}$ linearly equivalent to $nZ$ , the cohomology
sequence associated to the short exact sequence 
\[
0\rightarrow\mathcal{O}_{S_{\mathrm{GUT}}}\left(-\left(n-1\right)Z\right)\rightarrow\mathcal{O}_{S_{\mathrm{GUT}}}\left(Z\right)\rightarrow\mathcal{O}_{Z_{n}}\left(Z\right)\rightarrow0
\]
and Kodaira vanishing and Serre duality show that 
\begin{equation}
H^{0}\left(\mathcal{O}_{Z_{n}}\left(Z\right)\right)\cong H^{0}\left(\mathcal{O}_{S_{\mathrm{GUT}}}\left(Z\right)\right)\cong H^{2}\left(\mathcal{O}_{S_{\mathrm{GUT}}}\left(-Z\right)\right)^{\ast}\label{eq:ker2}
\end{equation}
and

\[
H^{1}\left(\mathcal{O}_{Z_{n}}\left(Z\right)\right)\cong H^{2}\left(\mathcal{O}_{S_{\mathrm{GUT}}}\left(-\left(n-1\right)Z\right)\right)\cong H^{0}\left(\mathcal{O}_{S_{\mathrm{GUT}}}\left(\left(n-1\right)Z\right)\right)^{\ast}
\]
and that all these groups have rank seven for $n=1$ and eight for
$n=2$ .

Finally, for $F_{\cdot}$ equal to $F_{+}$ or $F_{-}$, consider
\[
0\rightarrow\mathcal{O}_{B_{3}}\left(N-F_{\cdot}\right)\rightarrow\mathcal{O}_{B_{3}}\left(N\right)\rightarrow\mathcal{O}_{F_{_{\cdot}}}\left(S_{\mathrm{GUT}}\cap F_{\cdot}\right)\rightarrow0
\]
\[
0\rightarrow\mathcal{O}_{S_{\mathrm{GUT}}}\left(Z-\left(F_{\cdot}\cap S_{\mathrm{GUT}}\right)\right)\rightarrow\mathcal{O}_{S_{\mathrm{GUT}}}\left(Z\right)\rightarrow\mathcal{O}_{F_{\cdot}\cap S_{\mathrm{GUT}}}\left(Z\right)\rightarrow0
\]
and 
\[
0\rightarrow\mathcal{O}_{S_{\mathrm{GUT}}}\left(Z\right)\rightarrow\mathcal{O}_{S_{\mathrm{GUT}}}\left(Z+\left(F_{\cdot}\cap S_{\mathrm{GUT}}\right)\right)\rightarrow\mathcal{N}_{\left(F_{\cdot}\cap S_{\mathrm{GUT}}\right)|S_{\mathrm{GUT}}}\left(Z\cap F_{\cdot}\right)\rightarrow0
\]
where $\mathcal{N}_{\left(F_{\cdot}\cap S_{\mathrm{GUT}}\right)|S_{\mathrm{GUT}}}$
denotes the normal bundle of the rational curve $F_{\cdot}\cap S_{\mathrm{GUT}}$
in $S_{\mathrm{GUT}}$ so that 
\[
\mathcal{N}_{\left(F_{\cdot}\cap S_{\mathrm{GUT}}\right)|S_{\mathrm{GUT}}}\cong\mathcal{O}_{\left(F_{\cdot}\cap S_{\mathrm{GUT}}\right)}\left(-2\right).
\]

\section{$\mathbb{Z}_{2}$-quotients $B_{2}^{\vee}$ and $B_{3}^{\vee}$ and
their invariants}

Now the involution $C_{u,v}$ acts freely on $S_{\mathrm{GUT}}$ and
as 
\[
\mathcal{O}_{S_{\mathrm{GUT}}}\left(N+m\left(F_{+}-F_{-}\right)\right)\mapsto\mathcal{O}_{S_{\mathrm{GUT}}}\left(N+m\left(F_{-}-F_{+}\right)\right)
\]
Returning to 
\[
\pi_{S_{\mathrm{GUT}}^{\vee}}:S_{\mathrm{GUT}}\rightarrow S_{\mathrm{GUT}}^{\vee}:=\frac{S_{\mathrm{GUT}}}{\left\{ C_{u,v}\right\} }
\]
and letting $\mathcal{O}_{S_{\mathrm{GUT}}^{\vee}}\left(\varepsilon_{u,v}\right)$
denote the non-trivial flat (orbifold) line bundle quotient of the
trivial bundle induced by the fixpoint-free action of the involution
$C_{u,v}$, we have the splitting 
\[
\left(\pi_{S_{\mathrm{GUT}}^{\vee}}\right)_{\ast}\left(\mathcal{O}_{S_{\mathrm{GUT}}}\left(N\right)\right)=\mathcal{O}_{S_{\mathrm{GUT}}^{\vee}}\left(N\right)^{\left[+1\right]}\oplus\mathcal{O}_{S_{\mathrm{GUT}}^{\vee}}\left(N\right)^{\left[-1\right]}
\]
where 
\[
\begin{array}{c}
\mathcal{O}_{S_{\mathrm{GUT}}}\left(N\right)=\pi^{\ast}\mathcal{O}_{S_{\mathrm{GUT}}^{\vee}}\left(N\right)^{\left[+1\right]}\\
\mathcal{O}_{S_{\mathrm{GUT}}^{\vee}}\left(N\right)^{\left[-1\right]}=\mathcal{O}_{S_{\mathrm{GUT}}^{\vee}}\left(N\right)^{\left[+1\right]}\otimes\mathcal{O}_{S_{\mathrm{GUT}}^{\vee}}\left(\varepsilon_{u,v}\right).
\end{array}
\]
For any line bundle $L=\mathcal{O}_{S_{\mathrm{GUT}}}\left(D\right)$
such that
\[
C_{u,v}^{\ast}\left(D\right)=-D
\]
we have that 
\[
\pi_{\ast}\left(\mathcal{O}_{S_{\mathrm{GUT}}}\left(D\right)\right)=\mathcal{O}_{S_{\mathrm{GUT}}^{\vee}}\left(\pi_{\ast}D\right)\oplus\mathcal{O}_{S_{\mathrm{GUT}}^{\vee}}\left(\pi_{\ast}\left(-D\right)\right).
\]

\subsection{Cohomology of the Higgs bundle the Higgs curve }

We are now ready for the computation of the cohomology
\[
H^{i}\left(\Sigma_{\mathbf{\bar{5}}}^{\left(44\right)};\mathcal{L}_{\mathbf{\bar{5}}}^{\left(44\right)}\right)=H^{i}\left(\check{\Sigma}_{\mathbf{\bar{5}}}^{\left(44\right)};\mathcal{L}_{Higgs}^{\vee,\left[+1\right]}\right)\oplus H^{i}\left(\check{\Sigma}_{\mathbf{\bar{5}}}^{\left(44\right)};\mathcal{L}_{Higgs}^{\vee,\left[-1\right]}\right)
\]
of the Higgs line bundle $\mathcal{L}_{S_{\mathrm{GUT}}}=\mathcal{O}_{S_{\mathrm{GUT}}}\left(N+m\left(F_{+}-F_{-}^{opp}\right)\right)$
restricted to the Higgs curve $\Sigma_{\mathbf{\bar{5}}}^{\left(44\right)}$
as defined in \eqref{eq:receive}. We have seen in \eqref{eq:ker2}
above that on $S_{\mathrm{GUT}}$ that the restriction map
\[
H^{0}\left(\mathcal{O}_{S_{\mathrm{GUT}}}\left(N\right)\right)\rightarrow H^{0}\left(\mathcal{O}_{\Sigma_{\mathbf{\bar{5}}}^{\left(44\right)}}\left(N\right)\right)
\]
is an isomorphism. This fact has the important corollary that, by
Tables 1 \& 2 above, every section of $H^{0}\left(\mathcal{O}_{S_{\mathrm{GUT}}}\left(N\right)\right)$
is the pullback of a section of $H^{0}\left(\mathcal{O}_{\mathbb{P}_{\left[u_{0},v_{0}\right]}\times\mathbb{P}_{\left[n_{0},m_{1},m_{2}\right]}}\left(N\right)\right)$,
i.e. for the reduced image $T_{\mathrm{GUT}}\subseteq\mathbb{P}_{\left[u_{0},v_{0}\right]}\times\mathbb{P}_{\left[n_{0},m_{1},m_{2}\right]}$
of $S_{\mathrm{GUT}}\subseteq B_{3}$
\[
H^{0}\left(\mathcal{O}_{S_{\mathrm{GUT}}}\left(N\right)\right)=\rho^{\ast}H^{0}\left(\mathcal{O}_{T_{\mathrm{GUT}}}\left(N\right)\right).
\]
Therefore for $m>0$ we have the commutative diagram
\[
\begin{array}{ccc}
H^{0}\left(\mathcal{O}_{T_{\mathrm{GUT}}}\left(N-m\cdot\left\{ m_{2}=0\right\} \right)\right) & \hookrightarrow & H^{0}\left(\mathcal{O}_{T_{\mathrm{GUT}}}\left(N\right)\right)\\
\updownarrow= &  & \updownarrow=\\
H^{0}\left(\mathcal{O}_{S_{\mathrm{GUT}}}\left(N-m\cdot\left(F_{\pm}+F_{\pm}^{opp}\right)\right)\right) & \hookrightarrow & H^{0}\left(\mathcal{O}_{S_{\mathrm{GUT}}}\left(N\right)\right)\\
\updownarrow= &  & \updownarrow=\\
H^{0}\left(\mathcal{O}_{S_{\mathrm{GUT}}}\left(N-m\cdot F_{\pm}\right)\right) & \hookrightarrow & H^{0}\left(\mathcal{O}_{S_{\mathrm{GUT}}}\left(N\right)\right)\\
\updownarrow= &  & \updownarrow=\\
H^{0}\left(\mathcal{O}_{\Sigma_{\mathbf{\bar{5}}}^{\left(44\right)}}\left(\Sigma_{\mathbf{\bar{5}}}^{\left(44\right)}\cdot\left(N-m\cdot F_{\pm}\right)\right)\right) & \hookrightarrow & H^{0}\left(\mathcal{O}_{\Sigma_{\mathbf{\bar{5}}}^{\left(44\right)}}\left(\Sigma_{\mathbf{\bar{5}}}^{\left(44\right)}\cdot N\right)\right)
\end{array}
\]
where all the vertical maps are isomorphisms. Since $\mathcal{O}_{\Sigma_{\mathbf{\bar{5}}}^{\left(44\right)}}\left(\Sigma_{\mathbf{\bar{5}}}^{\left(44\right)}\cdot N\right)$
is a theta characteristic and 
\[
C_{u,v}^{\ast}\left(\mathcal{O}_{\Sigma_{\mathbf{\bar{5}}}^{\left(44\right)}}\left(\Sigma_{\mathbf{\bar{5}}}^{\left(44\right)}\cdot\left(m\cdot\left(F_{+}-F_{-}\right)\right)\right)\right)=\mathcal{O}_{\Sigma_{\mathbf{\bar{5}}}^{\left(44\right)}}\left(\Sigma_{\mathbf{\bar{5}}}^{\left(44\right)}\cdot\left(m\cdot\left(F_{-}-F_{+}\right)\right)\right)
\]
we are exactly in the situation of Step II of Lemma $1$ in \cite{Mumford}.
When $m=2$ the spaces on the left in the inclusion diagram are zero
since no non-zero section of $H^{0}\left(\mathcal{O}_{T_{\mathrm{GUT}}}\left(N\right)\right)$
vanishes to second-order on $\left\{ m_{2}=0\right\} $. Therefore
the eight points of $2\left(\Sigma_{\mathbf{\bar{5}}}^{\left(44\right)}\cdot\left(F_{-}+F_{-}^{opp}\right)\right)$
impose the independent conditions required in Step II for the proof
of the Lemma. We start from $m=1$ where

\[
\begin{array}{c}
H^{0}\left(\mathcal{O}_{\Sigma_{\mathbf{\bar{5}}}^{\left(44\right)}}\left(\Sigma_{\mathbf{\bar{5}}}^{\left(44\right)}\cdot\left(N+\left(F_{+}-F_{-}\right)\right)\right)\right)=\\
H^{0}\left(\mathcal{O}_{\Sigma_{\mathbf{\bar{5}}}^{\left(44\right)}}\left(\Sigma_{\mathbf{\bar{5}}}^{\left(44\right)}\cdot\left(N-F_{-}\right)\right)\right)=\\
H^{0}\left(\mathcal{O}_{S_{\mathrm{GUT}}}\left(N-F_{-}\right)\right)=H^{0}\left(\mathcal{O}_{\mathbb{P}_{\left[u_{0},v_{0}\right]}}\left(2\right)\right)\cdot m_{2}.
\end{array}
\]
Since $F_{-}\cdot\Sigma_{\mathbf{\bar{5}}}^{\left(44\right)}=2$ the
two points impose independent conditions on $H^{0}\left(\mathcal{O}_{S_{\mathrm{GUT}}}\left(N-F_{-}\right)\right)$
so we conclude by Step II of Lemma $1$ in \cite{Mumford} that 
\begin{equation}
1=h^{0}\left(\mathcal{O}_{\Sigma_{\mathbf{\bar{5}}}^{\left(44\right)}}\left(\Sigma_{\mathbf{\bar{5}}}^{\left(44\right)}\left(N-2F_{-}\right)\right)\right)=h^{0}\left(\mathcal{O}_{\Sigma_{\mathbf{\bar{5}}}^{\left(44\right)}}\left(\Sigma_{\mathbf{\bar{5}}}^{\left(44\right)}\cdot\left(N+2\left(F_{+}-F_{-}\right)\right)\right)\right)\label{eq:commdiag}
\end{equation}
confirmed by the fact that only one section of $H^{0}\left(\mathcal{O}_{S_{\mathrm{GUT}}}\left(N-F_{-}\right)\right)$,
namely $v_{0}^{2}\cdot m_{2}$ vanishes additionally to order $2$
at the points of $\left(\Sigma_{\mathbf{\bar{5}}}^{\left(44\right)}\cap F_{-}\right)$
and that the single generator of $H^{0}\left(\mathcal{O}_{\Sigma_{\mathbf{\bar{5}}}^{\left(44\right)}}\left(\Sigma_{\mathbf{\bar{5}}}^{\left(44\right)}\cdot\left(N+2\left(F_{+}-F_{-}\right)\right)\right)\right)$
is given by 
\begin{equation}
\frac{u_{0}^{2}}{v_{0}^{2}}m_{2}\in H^{0}\left(\Sigma_{\mathbf{\bar{5}}}^{\left(44\right)};\mathcal{L}_{\mathbf{\bar{5}}}^{\left(44\right)}\right)\label{eq:unique}
\end{equation}
 and the single generator of $H^{0}\left(\mathcal{O}_{\Sigma_{\mathbf{\bar{5}}}^{\left(44\right)}}\left(\Sigma_{\mathbf{\bar{5}}}^{\left(44\right)}\cdot\left(N+2\left(F_{-}-F_{+}\right)\right)\right)\right)$
is given by
\[
\frac{v_{0}^{2}}{u_{0}^{2}}m_{2}\in H^{0}\left(\Sigma_{\mathbf{\bar{5}}}^{\left(44\right)};C_{u,v}^{\ast}\left(\mathcal{L}_{\mathbf{\bar{5}}}^{\left(44\right)}\right)\right)
\]

To compute the symmetric/anti-symmetric decomposition of this generator
on $\check{\Sigma}_{\mathbf{\bar{5}}}^{\left(44\right)}$ in
\[
\mathcal{L}_{Higgs}^{\vee,\left[+1\right]}\oplus C_{u,v}^{\ast}\left(\mathcal{L}_{Higgs}^{\vee,\left[+1\right]}\right)=\mathcal{L}_{Higgs}^{\vee,\left[+1\right]}\oplus\mathcal{L}_{Higgs}^{\vee,\left[-1\right]}
\]
we consider local sections $\left(\gamma_{1}\oplus\delta_{1}\right)$
and $\left(\gamma_{2}\oplus\delta_{2}\right)$ of
\[
\mathcal{L}_{Higgs}\oplus C_{u,v}^{\ast}\left(\mathcal{L}_{Higgs}\right)
\]
at points $p_{1}$ and $p_{2}=C_{u,v}\left(p_{1}\right)$ and write
the expression
\[
\left(\gamma_{1}-C_{u,v}^{\ast}\left(\delta_{2}\right),\delta_{1}-C_{u,v}^{\ast}\left(\gamma_{2}\right)\right).
\]
If this expression is $\left(0,0\right)$ we have a symmetric section,
that is, the pull-back of a local section of $\mathcal{L}_{Higgs}^{\vee,\left[+1\right]}$
at $p\in\check{\Sigma}_{\mathbf{\bar{5}}}^{\left(44\right)}$ with
inverse image $\left\{ p_{1},p_{2}\right\} $. On the other hand,
if 
\[
\left(\gamma_{1}+C_{u,v}^{\ast}\left(\delta_{2}\right),\delta_{1}+C_{u,v}^{\ast}\left(\gamma_{2}\right)\right)=\left(0,0\right)
\]
the section is antisymmetric. 

In our case
\[
\begin{array}{cc}
\gamma_{1}=v_{0}^{2}m_{2} & \gamma_{2}=u_{0}^{2}m_{2}\\
\delta_{1}=u_{0}^{2}m_{2} & \delta_{2}=v_{0}^{2}m_{2}.
\end{array}
\]
Since $C_{u,v}^{\ast}\left(m_{2}\right)=-m_{2}$ the symmetric summand
\[
\begin{array}{c}
\gamma_{1}+C_{u,v}^{\ast}\left(\delta_{2}\right)=v_{0}^{2}m_{2}+C_{u,v}^{\ast}\left(v_{0}^{2}m_{2}\right)\\
=v_{0}^{2}m_{2}+\left(-v_{0}^{2}m_{2}\right)=0
\end{array}
\]
is zero and the unique non-zero section 
\[
\begin{array}{c}
\gamma_{1}-C_{u,v}^{\ast}\left(\delta_{2}\right)=v_{0}^{2}m_{2}-C_{u,v}^{\ast}\left(v_{0}^{2}m_{2}\right)\\
=v_{0}^{2}m_{2}-\left(-v_{0}^{2}m_{2}\right)=2v_{0}^{2}m_{2},
\end{array}
\]
the image of $2m_{2}v_{0}^{2}\in\rho^{\ast}H^{0}\left(\mathcal{O}_{T_{\mathrm{GUT}}}\left(N\right)\right)$,
generates the anti-symmetric summand. 
\begin{lem}
\label{lem:The-normal-functions}We select $m=2$ for our final refinement
of the definition of the Higgs line bundle 
\[
\mathcal{L}_{Higgs}=\mathcal{O}_{\mathcal{D}_{5}}\left(N+2\left(F_{+}-F_{-}\right)\right)
\]
on the spectral variety $\mathcal{D}_{5}=\mathcal{D}_{4}\cup\mathcal{D}_{1}$.
.There is a unique non-zero section \eqref{eq:unique} of the Higgs
line bundle on the Higgs curve $\Sigma_{\mathbf{\bar{5}}}^{\left(44\right)}$.
Its image under the direct sum decomposition 
\begin{equation}
H^{i}\left(\Sigma_{\mathbf{\bar{5}}}^{\left(44\right)};\mathcal{L}_{\mathbf{\bar{5}}}^{\left(44\right)}\right)=H^{i}\left(\check{\Sigma}_{\mathbf{\bar{5}}}^{\left(44\right)};\mathcal{L}_{Higgs}^{\vee,\left[+1\right]}\right)\oplus H^{i}\left(\check{\Sigma}_{\mathbf{\bar{5}}}^{\left(44\right)};\mathcal{L}_{Higgs}^{\vee,\left[-1\right]}\right)\label{eq:kernel}
\end{equation}
lies in the anti-symmeric summand. 
\end{lem}

\section{Asymptotic $\mathbb{Z}_{4}$ $\mathbf{R}$-symmetry \label{subsec:Modification-for-}}

\subsection{Asymptotic Tate form }

We next examine an 'asymptotic' $\mathbb{Z}_{4}$ R-symmetry for these
constructions. For this we will have to lift the $\mathbb{Z}_{4}$-action
on $B_{3}$ described in Tables 1\&2 above by lifting it to a $\mathbb{Z}_{4}$-action
on the semi-stable degeneration $W_{4,0}$ introduced in \eqref{eq:semi}.
The lifting is described by

\smallskip{}
 
\noindent \begin{center}
\begin{tabular}{|c|c|c|}
\hline 
Table 1$^{asymp}$:  & $T_{u,v}$  & $C_{u,v}$\tabularnewline
\hline 
\hline 
$h^{0}\left(K_{B_{3,0}^{\vee}}^{-1}\right)=4$  & $\begin{array}{c}
T_{u,v}\left(w\right)=w\\
T_{u,v}\left(x\right)=-x\\
T_{u,v}\left(y\right)=iy
\end{array}$  & $\begin{array}{c}
C_{u,v}\left(w\right)=w\\
C_{u,v}\left(x\right)=x\\
C_{u,v}\left(y\right)=-y\\
C_{u,v}\left(z\right)=-z
\end{array}$\tabularnewline
\hline 
$\left(u_{0}+v_{0}\right)^{2}\text{·}n_{-1}$  & $-1$  & $+1$\tabularnewline
\hline 
$\left(u_{0}-v_{0}\right)^{2}\text{·}n_{-1}$  & $+1$  & $+1$\tabularnewline
\hline 
$\left(u_{0}^{2}-v_{0}^{2}\right)\text{·}m_{-i}$  & $-1$  & $+1$\tabularnewline
\hline 
$\left(u_{0}^{2}-v_{0}^{2}\right)\text{·}m_{+i}$  & $+1$  & $+1$\tabularnewline
\hline 
\end{tabular}\smallskip{}
 
\par\end{center}

\noindent \begin{center}
\begin{tabular}{|c|c|c|}
\hline 
Table 2$^{asymp}$: & $T_{u,v}$  & $C_{u,v}$\tabularnewline
\hline 
\hline 
$h^{0}\left(K_{B_{3,0}^{\vee}}^{-1}\otimes\mathcal{O}_{B_{3,0}^{\vee}}\left(\varepsilon_{u,v}\right)\right)=5$  & $\begin{array}{c}
T_{u,v}\left(w\right)=w\\
T_{u,v}\left(x\right)=-x\\
T_{u,v}\left(y\right)=iy
\end{array}$  & $\begin{array}{c}
C_{u,v}\left(w\right)=w\\
C_{u,v}\left(x\right)=x\\
C_{u,v}\left(y\right)=-y\\
C_{u,v}\left(z\right)=-z
\end{array}$\tabularnewline
\hline 
$\left(u_{0}+v_{0}\right)^{2}\text{·}m_{-i}$  & $-i$  & $-1$\tabularnewline
\hline 
$\left(u_{0}+v_{0}\right)^{2}\text{·}m_{+i}$  & $+i$  & $-1$\tabularnewline
\hline 
$\left(u_{0}-v_{0}\right)^{2}\text{·}m_{-i}$  & $+i$  & $-1$\tabularnewline
\hline 
$\left(u_{0}-v_{0}\right)^{2}\text{·}m_{+i}$  & $-i$  & $-1$\tabularnewline
\hline 
$\left(u_{0}^{2}-v_{0}^{2}\right)\text{·}n_{-1}=:z_{0}$  & $-i$  & $-1$\tabularnewline
\hline 
\end{tabular}\smallskip{}
\par\end{center}

In the $F$-theory models we are proposing, the semi-stable degeneration
\[
W_{4,\delta}\Rightarrow W_{4,0}=dP_{a}\cup dP_{b}
\]
to the union of two bundles of del Pezzo surfaces over $B_{2}$ we
must introduce deformations 
\[
\begin{array}{c}
a_{j,\delta}=\delta a_{j}+\left(1-\delta\right)a_{j,0}\\
z_{\delta}=\delta z+\left(1-\delta\right)z_{0}\\
t_{\delta}=\delta t+\left(1-\delta\right)t_{0}
\end{array}
\]
of the Tate form (\ref{eq:Tate}) such that $a_{j,0},z_{0},t_{0}$
all are constrained to lie in the three-dimensional $\left(-i\right)$-eigenspace
for $T_{u,v}$ spanned by the $\left(-i\right)$-eigenvectors for
$T_{u,v}$ given in Table 2$^{asymp}$ and so that the action of $T_{u,v}$
on $\left[w,x,y\right]$ is also as indicated in Table 2$^{asymp}$.
These conditions will insure the existence of an asymptotic $\mathbb{Z}_{4}$
$\mathbf{R}$-symmetry as we discuss next.

First of all
\begin{equation}
\left[u_{0},v_{0}\right]\mapsto\left[u_{0},v_{0}\right]\left[\begin{array}{cc}
\frac{1+i}{2} & \frac{1-i}{2}\\
\frac{1-i}{2} & \frac{1+i}{2}
\end{array}\right]\label{eq:act}
\end{equation}
that is
\[
T_{u,v}^{\ast}\left(\left[\begin{array}{c}
u_{0}\\
v_{0}
\end{array}\right]\right)=\left[\begin{array}{c}
u_{0}\circ T_{u,v}\\
v_{0}\circ T_{u,v}
\end{array}\right]=\left[\begin{array}{cc}
\frac{1+i}{2} & \frac{1-i}{2}\\
\frac{1-i}{2} & \frac{1+i}{2}
\end{array}\right]\text{·}\left[\begin{array}{c}
u_{0}\\
v_{0}
\end{array}\right]
\]
so that $T_{u,v}^{\ast}\left(u_{0}\text{·}v_{0}\right)=\frac{1}{2}\left(u_{0}^{2}+v_{0}^{2}\right)$
and $\left(T_{u,v}^{\ast}\right)\left(\frac{u_{0}^{2}+v_{0}^{2}}{2}\right)=u_{0}v_{0}$.
Next from \eqref{eq:t0} $T_{u,v}^{\ast}\left(m_{1}\right)=m_{2}$
and $\left(T_{u,v}^{\ast}\right)\left(m_{2}\right)=-m_{1}$.

We will require that 
\[
z_{0}=\left(u_{0}^{2}-v_{0}^{2}\right)\text{·}n_{-1}
\]
so that 
\[
\begin{array}{c}
T_{u,v}^{\ast}\left(z_{0}\right)=-\left(\left(\frac{1+i}{2}u_{0}+\frac{1-i}{2}v_{0}\right)^{2}-\left(\frac{1-i}{2}u_{0}+\frac{1+i}{2}v_{0}\right)^{2}\right)\text{·}n_{-1}\\
-\left(\frac{i}{2}\left(u_{0}^{2}-v_{0}^{2}\right)-\frac{i}{2}\left(-\left(u_{0}^{2}-v_{0}^{2}\right)\right)\right)\text{·}n_{-1}=\\
-i\text{·}z_{0}
\end{array}
\]
As well we require that
\[
a_{5,0}=u_{0}v_{0}\text{·}m_{1}+i\text{·}\left(\frac{u_{0}^{2}+v_{0}^{2}}{2}\right)m_{2}
\]
so that
\[
\begin{array}{c}
\left(T_{u,v}^{\ast}\right)\left(a_{5,0}\right)\\
=\left(\frac{u_{0}^{2}+v_{0}^{2}}{2}\right)m_{2}-i\text{·}u_{0}v_{0}\text{·}m_{1}\\
=-i\text{·}\left(u_{0}v_{0}\text{·}m_{1}+i\text{·}\left(\frac{u_{0}^{2}+v_{0}^{2}}{2}\right)m_{2}\right)\\
=-i\text{·}a_{5.0}.
\end{array}
\]
Then, referring to (\ref{eq:a5}) 
\[
a_{5,\delta}\in\mathbb{C}\text{·}\left(u_{0}v_{0}\text{·}m_{1}+i\text{·}\left(\frac{u_{0}+v_{0}}{2}\right)m_{2}\right)+\delta\text{·}\left(\mathbb{C}\left(v_{0}^{2}+u_{0}^{2}\right)+\mathbb{C}u_{0}v_{0}\right)m_{2}+\mathbb{C}\text{·}z_{\delta}
\]
and the necessary twisting divisor $F_{+}-F_{-}$ lie in $S_{\mathrm{GUT}}$
for all $\delta$.

To achieve a $\mathbb{Z}_{4}$ $\mathbf{R}$-symmetry in the limiting
$W_{4,0}$ we first note that in the restriction Tate form to the
Heterotic model, the sections $a_{j},\,z,\,\frac{y}{x}\in H^{0}\left(K_{B_{3}}^{-1}\right)$
specialize by definition to sections 
\[
a_{j,0},\,z_{0},\,\frac{y_{0}}{x_{0}}\in H^{0}\left(K_{B_{2}}^{-1}\right)=\mathbb{C}\text{·}m_{+i}+\mathbb{C}\text{·}m_{-i}+\mathbb{C}\text{·}n_{-1}
\]
whose variation in the fiber variable $a=\frac{u_{0}-v_{0}}{u_{0}+v_{0}}$
of $B_{3}/B_{2}$ is absorbed in the $dP_{9}$ fiber while $\left[n_{-1},m_{+i},m_{-i}\right]$
in \eqref{eq:t1} are the coordinates of the base of the Heterotic
model $\tilde{V}_{3}/B_{2}$. Secondly the action of $T_{u,v}$ on
$H^{0}\left(\mathcal{O}_{\mathbb{P}_{\left[u_{0},v_{0}\right]}}\left(2\right)\right)$,
that is, on the fibers of $B_{3,0}/B_{2}$ has eigenvalues $+i$,
$+1$, and $-1$. Then , so 
\[
\begin{array}{c}
q_{+1}\left(u_{0},v_{0}\right)=u_{0}v_{0}+\frac{1}{2}\left(u_{0}^{2}+v_{0}^{2}\right)\\
q_{-1}\left(u_{0},v_{0}\right)=u_{0}v_{0}-\frac{1}{2}\left(u_{0}^{2}+v_{0}^{2}\right)\\
q_{+i}\left(u_{0},v_{0}\right)=u_{0}^{2}-v_{0}^{2}.
\end{array}
\]
Then the $-i$-eigenspace in Table 2 is spanned by the three vectors
\[
q_{+i}\text{·}n_{-1},\,q_{+1}\text{·}m_{-i},\,q_{-1}\text{·}m_{+i}.
\]
$\left(T_{u,v}^{\ast}\right)^{2}\left(u_{0}\text{·}v_{0}\right)=C_{u,v}^{\ast}\left(u_{0}\text{·}v_{0}\right)=v_{0}\text{·}u_{0}$
as required for compatibility with the definition of the Higgs bundle
for all $\delta$.

Proceeding in this way we can arrange so that the action of $T_{u,v}$
on the Heterotic model is given by the action of 
\[
\begin{array}{c}
a_{j,0}\circ T_{u,v}=\left(-i\right)\text{·}a_{j,0}\,\,j=2,3,4,5\\
a_{0,0}=-\left(a_{2,0}+a_{3,0}+a_{4,0}+a_{5,0}\right).\\
z_{0}=\kappa_{2,0}\text{·}a_{2,0}+\kappa_{3,0}\text{·}a_{3,0}+\kappa_{4,0}\text{·}a_{4,0}+\kappa_{5,0}\text{·}a_{5,0}.
\end{array}
\]
Notice again that these definitions imply that $T_{u,v}$ acts on
the Heterotic model as $T_{u,v}$ on $V_{3}/B_{2}$ and as (\ref{eq:act})
on the $dP_{9}$-bundles. 
\begin{lem}
\label{lem:The-Tate-form}Under the above assumptions, the specialization
of the Tate form and $B_{2,\delta}$ (\ref{eq:Tate}) to $\delta=0$
is taken to minus itself under the action of $T_{u,v}$. We will therefore
say that our $F$-theory model satisfies an asymptotic $\mathbb{Z}_{4}$
\textbf{R}-symmetry. 
\end{lem}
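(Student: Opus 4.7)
The plan is to verify directly that every monomial appearing in the Tate form, once we set $\delta=0$, acquires the same factor $-1$ under $T_{u,v}$. Because of the defining constraint that all $a_{j,0}$ and $z_0$ lie in the $(-i)$-eigenspace of $T_{u,v}$ (the bottom three rows of Table 2), and because Table 1 together with Table 2 prescribes $T_{u,v}(w)=w$, $T_{u,v}(x)=-x$, $T_{u,v}(y)=iy$, we know the $T_{u,v}$-character of each of the basic sections. So the entire task is to pair these characters against the exponent pattern of the Tate equation.

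First I would collect the characters: $w$ has weight $+1$, $x$ has weight $-1$, $y$ has weight $+i$, while each $a_{j,0}$ and $z_0$ has weight $-i$ (and $t_0=y_0/x_0$ has weight $(+i)/(-1)=-i$, consistently). Next I would verify term by term that each monomial of the Tate form has total weight $-1$:
\begin{itemize}
\item $wy^2$ transforms as $(+1)(+i)^2=-1$;
\item $x^3$ transforms as $(-1)^3=-1$;
\item $a_{5,0}\,xyw$ transforms as $(-i)(-1)(+i)(+1)=-1$;
\item $a_{4,0}\,z_0 x^2 w$ transforms as $(-i)(-i)(+1)(+1)=-1$;
\item $a_{3,0}\,z_0^{\,2} y w^2$ transforms as $(-i)(-i)^2(+i)(+1)=-1$;
\item $a_{2,0}\,z_0^{\,3} x w^2$ transforms as $(-i)(-i)^3(-1)(+1)=-1$;
\item $a_{0,0}\,z_0^{\,5} w^3$ transforms as $(-i)(-i)^5(+1)=-1$.
\end{itemize}
Since $a_{0,0}=-(a_{5,0}+a_{4,0}+a_{3,0}+a_{2,0})$ is a sum of $(-i)$-eigenvectors, it is itself a $(-i)$-eigenvector, so the last line is legitimate and (\ref{eq:seccond}) is preserved.

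Having checked these seven items, the conclusion is immediate: the full polynomial $P(w,x,y,z_0,a_{j,0})$ defining the $\delta=0$ specialization of (\ref{eq:Tate}) satisfies $T_{u,v}^{\ast}P=-P$, so the scheme-theoretic zero locus is preserved while the defining equation is taken to its negative. This is precisely the asymptotic $\mathbb{Z}_4$ $\mathbf{R}$-symmetry asserted. There is no genuine obstacle here beyond accurate bookkeeping of the powers of $-i$; the only place where one must be a bit careful is verifying that the section condition (\ref{eq:seccond}) is consistent with the eigenspace constraint (so that $a_{0,0}$ defined as minus the sum of the other four lies in the same eigenspace), which is automatic because the $(-i)$-eigenspace is a linear subspace.
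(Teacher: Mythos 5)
Your proposal is correct and is exactly the argument the paper intends: the paper's proof simply says ``one checks directly using (\ref{eq:Tate}) and the eigenvalues in the $T_{u,v}$-column of Table 2 that the Tate form is taken to minus itself,'' and your term-by-term weight computation (each monomial acquiring the factor $-1$) is that check carried out explicitly, with all seven products computed correctly. Your added remark that $a_{0,0}=-(a_{2,0}+a_{3,0}+a_{4,0}+a_{5,0})$ automatically lies in the $(-i)$-eigenspace is a useful consistency observation the paper leaves implicit.
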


\begin{proof}
One checks directly using (\ref{eq:Tate}) and the eigenvalues in
the $T_{u,v}$-column in the above Table 2$^{asymp}$ that the Tate
form is taken to minus itself. 
\end{proof}
\begin{flushleft}
Therefore $T_{u,v}$ takes the holomorphic four-form on the semi-stable
limit $W_{4,0}$ to minus itself \cite{Davies}. As a consequence
$T_{u,v}$ acts trivially on the global section of $K_{S_{\mathrm{GUT}}}$.
Furthermore $T_{u,v}$ commutes with $C_{u,v}$ since the action of
$T_{u,v}^{2}$ coincides with the action of $C_{u,v}$. Thus $T_{u,v}$
will be our candidate for the asymptotic $\mathbb{Z}_{4}$ $\mathbf{R}$-symmetry
on the quotient Calabi-Yau fourfold 
\[
W_{4,0}^{\vee}/B_{3,0}^{\vee}:=\frac{W_{4,0}/B_{3,0}}{\left\{ C_{u,v}\right\} }.
\]
\par\end{flushleft}

\subsection{Charges for $\mathbb{Z}_{4}$-\textmd{R} symmetry \cite{Lee}}

The generator $T_{u,v}$ of our asymptotic $\mathbb{Z}_{4}$ $\mathrm{\mathbf{R}}$-symmetry
on $W_{4,0}$ and $W_{4,0}^{\vee}$ has the defining equations of
$\Sigma_{\mathbf{10}}^{\left(4\right)}$ and $\Sigma_{\mathbf{\bar{5}}}^{\left(41\right)}$
as $-i$-eigenvectors and the defining equation of $\Sigma_{\mathbf{\bar{5}}}^{\left(44\right)}$
as $\left(-1\right)$-eigenvector. Therefore, since $z_{0}$ has eigenvalue
$\left(-i\right)$ , we apply these values in \cite{Clemens-3} to
obtain the following table: 
\begin{center}
\smallskip{}
\par\end{center}

\begin{center}
\begin{tabular}{|c|c|c|c|}
\hline 
TABLE 3:\quad{}$T_{u,v}$  & Equation  & $T_{u,v}$-charge  & states\tabularnewline
\hline 
\hline 
matter fields on $\frac{\Sigma_{\mathbf{10}}^{\left(4\right)}}{\left\{ C_{u,v}\right\} }$  & $a_{5}=z=0$  & $-1$  & $H^{0}\left(\frac{\Sigma_{\mathbf{10}}^{\left(4\right)}}{\left\{ C_{u,v}\right\} };\mathcal{L}_{Higgs}^{\vee,\left[\pm1\right]}\right)$\tabularnewline
\hline 
matter fields on $\frac{\Sigma_{\mathbf{\bar{5}}}^{\left(41\right)}}{\left\{ C_{u,v}\right\} }$  & $a_{420}=z=0$  & $-1$  & $H^{0}\left(\frac{\Sigma_{\mathbf{\bar{5}}}^{\left(41\right)}}{\left\{ C_{u,v}\right\} };\mathcal{L}_{Higgs}^{\vee,\left[\pm1\right]}\right)$\tabularnewline
\hline 
Higgs fields on $\frac{\Sigma_{\mathbf{\bar{5}}}^{\left(44\right)}}{\left\{ C_{u,v}\right\} }$  & $\begin{array}{c}
\left|\begin{array}{cc}
a_{4} & -a_{5}\\
a_{3}+a_{0} & a_{3}
\end{array}\right|\\
=z=0
\end{array}$  & $+i$  & $\begin{array}{c}
H^{0}\left(\frac{\Sigma_{\mathbf{\bar{5}}}^{\left(44\right)}}{\left\{ C_{u,v}\right\} };\mathcal{L}_{Higgs}^{\vee,\left[-1\right]}\right)\\
H^{1}\left(\frac{\Sigma_{\mathbf{\bar{5}}}^{\left(44\right)}}{\left\{ C_{u,v}\right\} };\mathcal{L}_{Higgs}^{\vee,\left[-1\right]}\right)
\end{array}$\tabularnewline
\hline 
bulk matter on $\frac{S_{\mathrm{GUT}}}{\left\{ C_{u,v}\right\} }$  & $z=0$  & $-i$  & $H^{2}\left(K_{\frac{S_{\mathrm{GUT}}}{\left\{ C_{u,v}\right\} }}\right)$\tabularnewline
\hline 
\end{tabular}
\par\end{center}

\section{Conclusion}

Rather than inventing a base-space $B_{3}$ and fine-tuning it to
yield the right invariants for a phenomenologically consistent $F$-theory,
we have adopted the philosophy that the representation theory required
by the physics will dictate the base space for the Tate form and ensuing
$F$-theory. Perhaps surprisingly, the representation theory contains
almost completely within itself one and only one phenomenologically
consistent $F$-theory. The detailed presentation of that model, based
on the construction and analysis of the $B_{3}$ presented in this
paper, is the subject of the companion paper \cite{Clemens-3}.

\section*{Acknowledgments}

The authors would like to thank Dave Morrison, Tony Pantev and Sakura
Schäfer-Nameki for their guidance and many helpful conversations over
several years. However the authors themselves take sole responsibility
for any errors or omissions in this series of papers. S.R. acknowledges
partial support from Department of Energy grant DE-SC0011726.

\end{document}